\documentclass[twocolumn,twoside]{IEEEtran}
\IEEEoverridecommandlockouts
\usepackage{comment}

\usepackage{dsfont}
\usepackage{cite}
\usepackage{stfloats}

\usepackage{psfrag}
\usepackage[mathscr]{euscript}
\usepackage{acronym}  
\usepackage{algorithm}
\usepackage[noend]{algpseudocode}
\usepackage{booktabs}

\usepackage{bbm}
\usepackage{listings}
\usepackage{cite}
\usepackage{url}
\usepackage{xcolor}
\usepackage{graphicx}
\usepackage{epstopdf}
\usepackage[cmex10]{amsmath}
\usepackage{amssymb}
\usepackage{amsthm}
\usepackage{mathrsfs}
\usepackage{verbatim}
\usepackage[noend]{algpseudocode}

\usepackage{setspace}
\usepackage{bm}%
\usepackage{stfloats}
\usepackage{enumerate}

\theoremstyle{plain}
\usepackage{multirow}
\newtheorem{theorem}{Theorem}
\newtheorem{remark}{Remark}
\newtheorem{lemma}[]{Lemma}
\newtheorem{corollary}{Corollary}

\usepackage{array}
\usepackage[caption=false,font=footnotesize,labelfont=normalfont,textfont=normalfont]{subfig}

\makeatletter
\def\BState{\State\hskip-\ALG@thistlm}
\makeatother

\usepackage{float}
\usepackage{balance}

\newlength \figwidth
\usepackage[acronym]{glossaries}

\makeglossaries
\newacronym{NTN}{NTN}{Non-terrestrial networks}
\newacronym{IoT}{IoT}{Internet of Things}
\newacronym{6G}{6G}{sixth-generation}
\newacronym{EH}{EH}{Energy Harvesting}
\newacronym{AoI}{AoI}{Age of Information}
\newacronym{LEO}{LEO}{Low Earth Orbit}
\newacronym{SG}{SG}{stochastic geometry}
\newacronym{SNR}{SNR}{signal-to-noise ratio}
\newacronym{IID}{\emph{i.i.d.}}{independently and identically distributed}
\newacronym{LoS}{LoS}{line-of-sight}

\begin{document}

\title{Age of Information in Non-Terrestrial Networks with Energy Harvesting}

\author{
Fangming~Zhao,      
Nikolaos Pappas,
Shi Jin, 
and~Howard~H.~Yang

\thanks{F.~Zhao and H.~H.~Yang are with the ZJU-UIUC Institute, Zhejiang University, Haining 314400, China (e-mail: fangming.23@intl.zju.edu.cn; haoyang@intl.zju.edu.cn).
}


\thanks{ Nikolaos Pappas is with the Department of Computer and Information Science, Linkoping University, Linkoping 58183, Sweden (e-mail: nikolaos.pappas@liu.se).}

\thanks{S. Jin is with the National Mobile Communications Research Laboratory, Southeast University, Nanjing 210096, China (e-mail: jinshi@seu.edu.cn).}

}

                    

\maketitle

\begin{abstract}
We analyze the timeliness of status-update delivery in a low Earth orbit (LEO) satellite-assisted energy-harvesting Internet of Things network using the Age of Information (AoI) metric. A ground source harvests ambient energy and sends status updates to a remote destination
through LEO satellites. Because of satellite mobility, source-to-satellite connectivity alternates between on and off periods whose durations depend on the satellite-ground geometry. The source does not know the connectivity state a priori and therefore employs
a probe-before-transmission mechanism: it first expends one energy unit to sense satellite availability and transmits an update only after a successful probe. We combine spherical stochastic geometry with semi-Markov analysis to characterize the coupled evolution of
satellite connectivity and the source energy buffer, and derive an analytical expression for the time-average AoI. We then develop a lower-complexity approximation by replacing the instantaneous connectivity state in the energy process with the long-term on-state
probability. The resulting approximation is accurate when the energy constraint is weak or satellite connectivity is highly intermittent. Numerical results show that probing can substantially reduce AoI relative to blind transmission by preventing energy expenditure during off periods, particularly under sparse satellite deployment, stringent decoding requirements, or limited energy harvesting.
\end{abstract}

\begin{IEEEkeywords}
Low Earth orbit satellites, energy harvesting, Age of Information, stochastic geometry, semi-Markov analysis. 
\end{IEEEkeywords}


\section{Introduction}\label{sec:intro}
\gls{NTN} are expected to extend connectivity to infrastructure-limited regions by leveraging \gls{LEO} satellites. Meanwhile, ground \gls{IoT} devices deployed for remote sensing in these environments are typically energy-constrained, since battery replacement and wired recharging are costly. \gls{EH} technology provides a promising solution by enabling self-sustained operation through renewable or ambient energy sources, such as solar and radio-frequency energy\cite{EHSurvey:JSAC}. LEO satellite-assisted EH-IoT networks are an emerging research direction for sustainable remote IoT connectivity.


\par
The integration of \gls{LEO} and EH-IoT, however, introduces several performance-evaluation challenges. First, satellite connectivity is inherently intermittent due to orbital dynamics and limited visibility windows, causing the channel to alternate between available and unavailable states, and the distribution of available periods is governed by the satellite trajectory. Second, with \gls{EH}, transmission opportunities are further constrained by the stochastic energy state of the device, and an update can be attempted only when sufficient energy has been accumulated. Third, an \gls{IoT} device cannot know the channel state in advance. Therefore, it first probes the channel and transmits data only when the channel is available. Since probing consumes different energy in on and off states, the energy process becomes state-dependent and correlated with the satellite visibility process. This \textit{channel-energy availability coupling} invalidates a simple one-dimensional energy-state Markov model and motivates a joint semi-Markov analysis.


\par
Furthermore, since the data center relies on information collected from remote \gls{IoT} devices for monitoring and decision-making, conventional metrics such as throughput or delay are insufficient. A metric is needed that captures how fresh the delivered information is under intermittent connectivity and energy-limited transmissions. This motivates the use of the \gls{AoI} at the receiver. \gls{AoI} provides such a metric by measuring the time elapsed since the most recently received update was generated \cite{JSACAoIsurvey, pappas2023age}. It has been widely used to quantify information timeliness and guide the design of update policies \cite{kadota2018optimizing}. In LEO satellite-assisted EH-IoT networks, the \gls{AoI} analysis is particularly challenging because the update process is governed by the joint evolution of satellite visibility, energy accumulation, probing decisions. Motivated by these observations, this work analyzes the time-average \gls{AoI} of LEO satellite-assisted EH-IoT network. The key analytical challenge lies in characterizing the coupling between stochastic channel availability and energy availability. 

\subsection{Related Work}
The timeliness of information in non-terrestrial networks (NTNs) has recently attracted increasing attention. Early analytical studies have investigated latency and AoI in multi-hop satellite networks, where satellite relays are modeled through queueing systems to characterize end-to-end freshness performance \cite{Petar:ICC2020}. More recent works have further considered the unique characteristics of LEO satellite communications. \cite{TMC2025:AoINTN:Mhop} studied AoI variation in LEO satellite-terrestrial uplink transmissions under time-varying link conditions, while \cite{TVT2025:AoINTN} analyzed information freshness in multi-hop satellite IoT systems with reliability mechanisms such as ARQ/HARQ. Beyond conventional freshness metrics, \cite{ErfanNikos:NTNAoI} considered semantics-aware unified terrestrial non-terrestrial networks, where timeliness is jointly considered with information relevance and utility. However, their analytical models typically characterize the satellite link through simplified service processes, prescribed connection patterns, or abstract time-varying link states. As a result, the spatial randomness of LEO satellite deployments and satellite visibility is often not explicitly captured.

Spherical \gls{SG} has been recognized as an effective tool for characterizing large-scale NTN deployments\cite{wang2025modeling}, satellite availability, coverage probability, and link-level performance. For instance, \cite{al2021analytic} developed a tractable analytical framework for modeling the downlink coverage probability of dense satellite networks by incorporating satellite-to-ground path loss and LoS probability. \cite{talgat2021stochastic} developed a \gls{SG} analytical framework for LEO satellite systems with ground gateways, and highlighting the potential of LEO satellites to enhance connectivity in remote areas. \cite{JungTCOM2022ShadowedRicianSG} analyzed LEO satellite downlink systems under shadowed-Rician fading by modeling satellites as a homogeneous BPP on the sphere, and derived outage probability. \cite{SongTWC2023CooperativeSAT} extended \gls{SG}-based modeling to cooperative satellite-aerial-terrestrial systems, providing analytical tools for evaluating coverage rate in integrated space-air-ground architectures. 
\cite{park2022tractable} proposed a tractable approach to downlink coverage analysis in satellite networks by incorporating spatial randomness, path loss, and fading effects. \cite{AlDopplerCL} analyzed the Doppler shift distribution in satellite constellations.
\cite{IoTJ2025:NTNIoTSG} study IoT-over-LEO satellite systems under more realistic operational constraints, such as finite terrestrial regions, limited satellite coverage. \cite{tang2025sinrcoverageleo} analyze SINR coverage in LEO satellite networks by modeling satellite locations as a strong ball-regulated point process on the sphere, which captures the locally repulsive property induced by inter-satellite safety distances and yields tractable lower bounds. The above \gls{SG}-based NTN studies provide powerful statistical tools for characterizing satellite visibility and coverage, but their performance metrics are mainly coverage probability, rate, and availability, rather than information freshness. 

Recent studies have started to combine \gls{SG} with AoI analysis in NTNs. For instance, \cite{yanwuGlobecom} modeled the service process between LEO satellites and a source node as an on-off process and derived a closed-form expression for the time-average AoI by leveraging \gls{SG}. \cite{TWC2020:NTNPAoI} analyze the average PAoI in LEO satellite-enabled IoT networks, where ground IoT nodes and LEO satellites are modeled as independent PPPs. These studies provide important insights into of LEO network performance, but the coupling between energy availability and satellite channel availability remains insufficiently characterized.

In summary, most existing spherical SG-based studies of satellite networks rely on spatial snapshots for instantaneous coverage probability or rate analysis. However, they generally overlook the temporal switching of satellite visibility and the energy-buffer dynamics of EH-IoT devices, which are crucial for characterizing AoI performance. Furthermore, in remote IoT scenarios, a device often cannot know satellite channel availability in advance and must actively probe the channel before transmission\cite{li2026ProbeAoI}. Since probing in unavailable states and successful transmission in available states consume different amounts of energy, the energy process and the satellite-visibility process become coupled. This motivates an AoI-oriented analysis that jointly accounts for intermittent satellite connectivity and state-dependent energy consumption.

\subsection{Contributions}
The contributions of this work are summarized as follows:
\begin{itemize}
\item We present an analytical study of the \gls{AoI} in LEO satellite-assisted EH-IoT networks, focusing on a long-distance scenario where a ground node updates its status information to a remote destination through LEO satellites. We transform the spatial distribution and mobility of satellites into a temporal on–off process to characterize the intermittent availability of satellite service. When the node lacks knowledge of the on–off state, we introduce a probe-before-transmission scheme, enabling the source to sense link availability prior to each transmission and thereby reducing energy wasted during off periods.
\item To capture both the spatial randomness of satellite positions and the temporal dynamics of energy harvesting, we integrate a spherical Poisson point process with semi-Markov analysis, yielding the analytical expression for the time-average \gls{AoI}. This unified framework effectively bridges spatial and temporal variability, providing a tractable tool for evaluating timeliness in LEO satellite-assisted EH-IoT networks. We then develop a lower-complexity approximation by replacing the instantaneous connectivity state in the energy process with the long-term on-state probability.
\item Through numerical analysis, we further extract several insights into the \gls{AoI} behavior of LEO satellite-assisted EH-IoT. Specifically, we observe that ($i$) under highly intermittent connectivity, the probe-before-transmission strategy provides superior timeliness compared to the blind-transmission scheme; ($ii$) the optimal \gls{AoI} performance is achieved when the average energy consumption rate is equal to or greater than the harvesting rate; and ($iii$) in dense satellite constellations, \gls{AoI} becomes insensitive mainly to the decoding threshold, being primarily governed by the update frequency and satellite density.
\end{itemize}

\section{System Model}\label{sec:sysmod}

In this section, we detail the network configuration, the energy-harvesting processes and utilization strategy, and the performance metric. 

\begin{figure}[t] 
\centering{}
{\includegraphics[width=\figwidth]{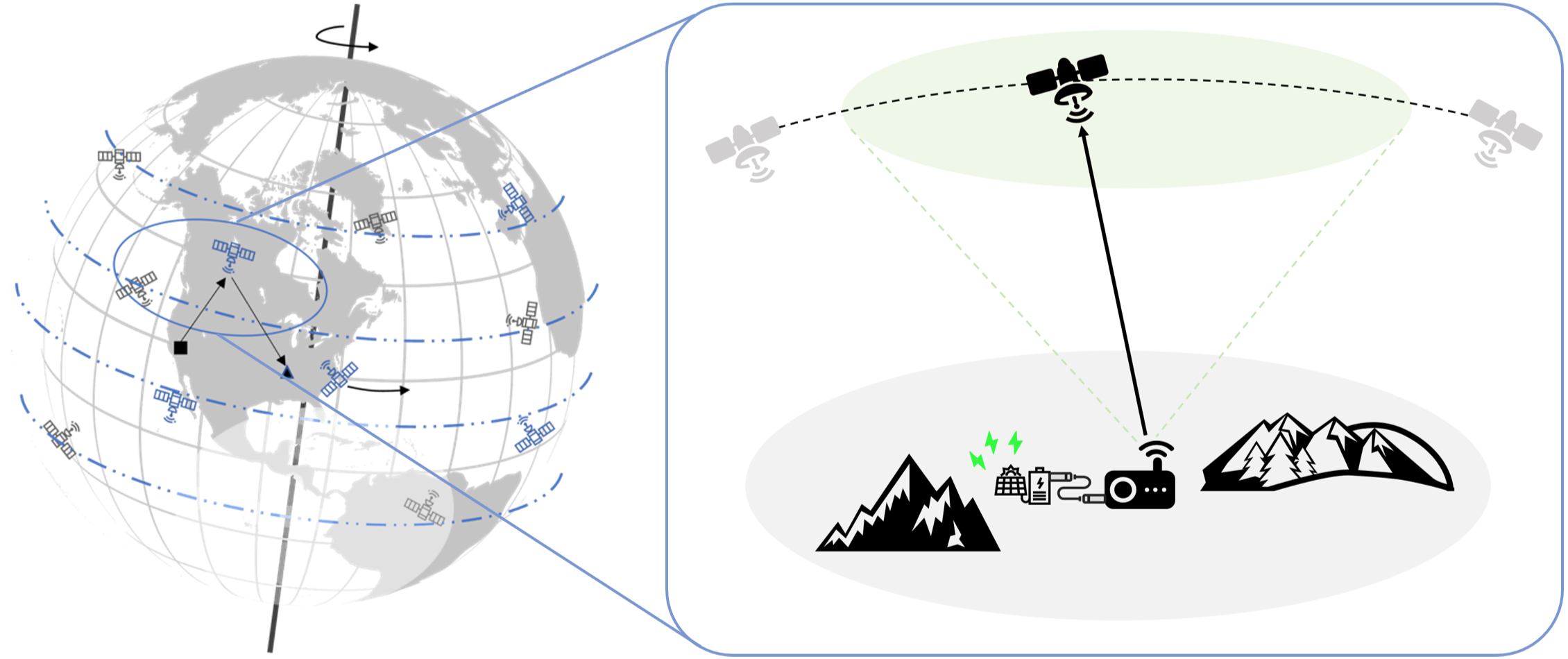}}
\caption{A snapshot of the scenario under consideration, with a constellation of LEO satellites deployed on a sphere and connecting source nodes.}
\label{fig:Sat}
\vspace{-0.3cm}
\end{figure}

\subsection{Network Configuration}
We consider an NTN consisting of a constellation of \gls{LEO} satellites deployed at the same altitude $h$. We assume the positions of the satellites to follow a homogeneous Poisson point process (PPP) \cite{al2021analytic, park2022tractable} of intensity $\lambda$ on a sphere of radius $R_E+h$,{\footnote{In practice, spatial density of the \gls{LEO} satellites can be estimated as $\lambda=\frac{N_S}{4\pi(R_E+h)^2}$, where $N_S$ is the number of satellites deployed\cite{AlDopplerCL}.}} where $R_E$ denotes the Earth's radius. 
We focus on a source node on the ground that needs to send a sequence of information packets, each containing its latest status information, to a destination node. We consider the scenario in which the source is outside the coverage area of a terrestrial network. Hence, communication needs to take place via the \gls{NTN}, through satellite nodes. An illustrative example of this setup is provided in Fig.~\ref{fig:Sat}.

We assume the source node harvests energy from the ambient environment and uses only the harvested energy for packet transmission. 
Specifically, we model the arrival of energy units as a Poisson process (in time) with rate $\xi$; and the source node stores each incoming energy unit into an energy buffer with capacity $B$.

\subsection{Packet Transmissions}
We assume the source node employs the \textit{generate-at-will} policy for status updates, in which a packet is transmitted immediately upon generation.  
Specifically, we model the interval between consecutive update attempts of two consecutive status updates as \gls{IID}, following an exponential distribution with rate $\mu$. 
We further assume the source node sends out information packets (to a relay satellite) at a fixed transmit power $P_{\mathrm{tx}}$, which consumes $N$ units of energy.
The signal propagation is subject to path loss that obeys a power law with path loss exponent $\alpha$, and reception experiences white Gaussian noise with variance $\sigma^2$. 

When the source transmits, if the \gls{SNR} received at the satellite exceeds a decoding threshold $\theta$, we consider the source node to be within the \gls{LEO}'s coverage. 
Out of the satellites from which it receives coverage, the source node then connects to at most one, namely the one providing the highest received power. 
If the source node is connected to a satellite, the corresponding information packet can be successfully delivered to the satellite after experiencing a constant propagation delay $D$.

Due to the orbital motion of \gls{LEO} satellites, a satellite currently serving the source node would eventually roam away. And before the next available satellite becomes visible, the source node is temporarily \textit{out of coverage}. 
We assume that the source node has no information about the \gls{LEO}'s operation pattern.
As such, to prevent the source node from sending information packets during the off coverage period, which is a waste of energy, we stipulate the transmitter to spend one unit of its harvested energy to probe connectivity from the \gls{LEO} before each transmission, and only transmits by confirming it is \textit{in coverage}.
To this end, the source node shall accumulate at least $N+1$ energy units before initiating a transmission/updating attempt\footnote{In this paper, we focus on probing-based timeliness enhancement over stochastic satellite channels and the resulting coupling between energy availability and channel availability. More detailed fading models and multi-user resource contention are beyond the scope of this work. Nevertheless, the framework can be extended by incorporating probabilistic link availability under, e.g., Nakagami-$m$ fading, and by introducing an on-state access success probability to capture resource contention.}. The successful update further experiences a deterministic system time of $3D$, which accounts for the probe request, probe response, and payload transmission. To reduce handover signaling and energy expenditure, we consider a single-satellite association policy. Once the source acquires a serving satellite, it remains associated with this satellite during the corresponding contact period and does not switch to other satellites before the current contact terminates.

\begin{figure}[t!] 
  \centering{}
    {\includegraphics[width=\figwidth]{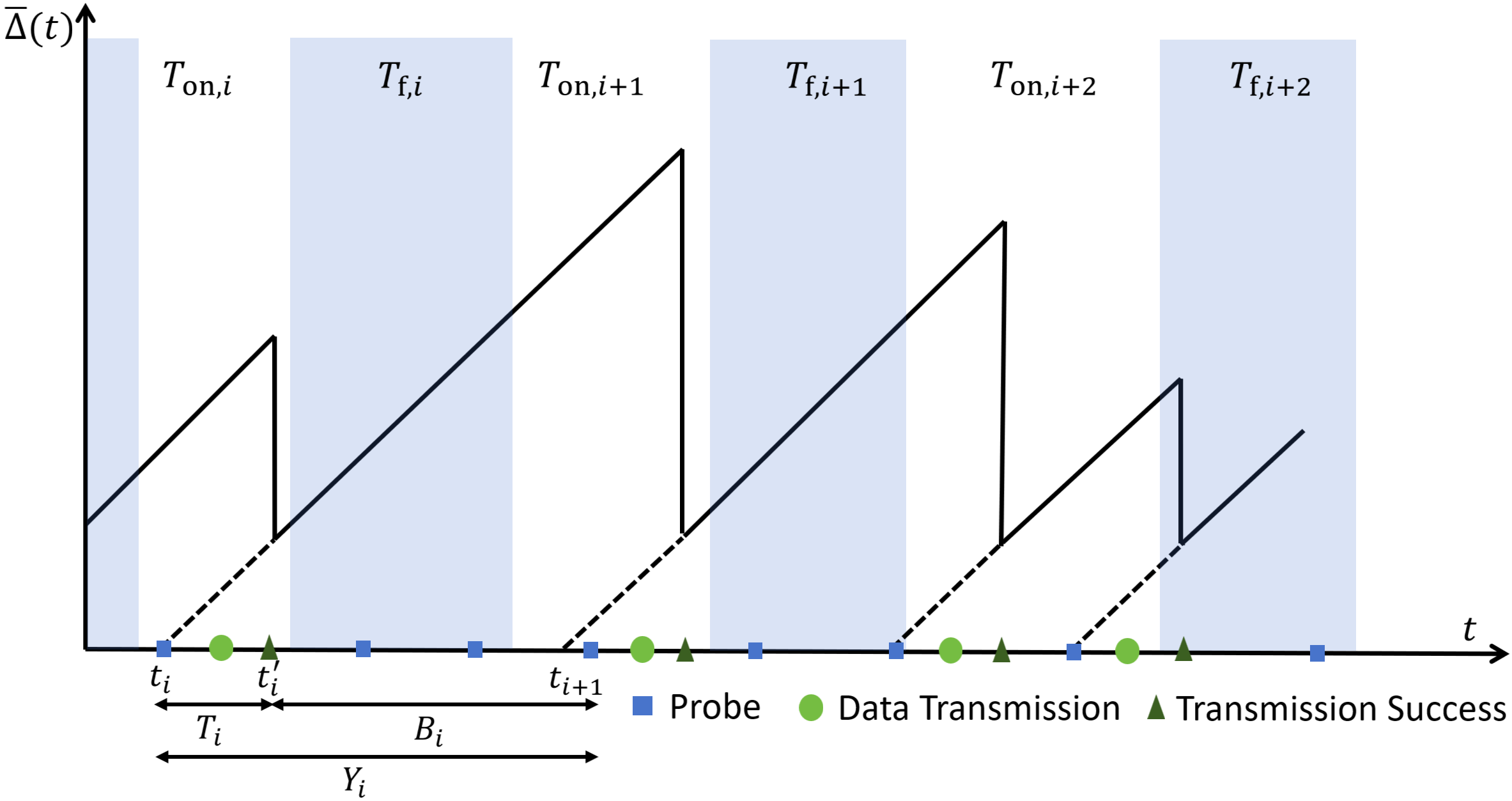}}
  \caption{Example of AoI evolution under on- and off-service periods, which is relevant to intermittent NTN connectivity.}
  \label{fig:AoI}
  \vspace{-0.5cm}
\end{figure}

\subsection{Performance Metric}

In this work, we use the \gls{AoI} metric to evaluate the system performance. 
\gls{AoI} quantifies the \emph{freshness} of information received at a destination node.
Formally, the AoI of the source-destination pair evolves as follows:
\begin{align} \label{equ:AoI}
    \Delta (t) = t - G(t),
\end{align}
where $G(t)$ is the timestamp at which the latest update received by the destination at time $t$ was generated at the source. 

The inherent dynamics of the \gls{NTN} cause the availability of satellite connectivity to be intermittent. Therefore, updates from the source node can only be successfully delivered while such connectivity is available. As a result, the evolution of the \gls{AoI} in an \gls{NTN} follows a trajectory as illustrated in Fig.~\ref{fig:AoI}, where shaded areas correspond to intervals without connectivity, during which any newly generated updates are lost. To capture the overall timeliness in the delivery of status updates through an \gls{NTN}, we then define the time-average \gls{AoI} as follows:
\begin{align}
    \bar{\Delta} = \lim_{T \to \infty }{\frac{1}{T}\int_{0}^{T} \Delta(t)}.
\end{align}

In the sequel, we derive the analytical expression for $\bar{\Delta}$, and we shed light on how it is affected by the key \gls{NTN} system-level parameters.

\section{Analysis of \gls{AoI} in \gls{NTN}}\label{sec:age_analysis}

We begin by characterizing the distribution of the intervals during which the source node is connected/disconnected to/from the \gls{NTN}, which we refer to as the \emph{on}- and \emph{off-service} periods. We then derive the analytical expression for the time-average \gls{AoI} and discuss special cases and approximations to provide additional insights. 

\subsection{Distribution of the On-Off Service Periods}

 
\begin{figure*}[!t]
    \centering
    \subfloat[Satellite motion from the node's perspective.]{
        \includegraphics[width=0.33\linewidth]{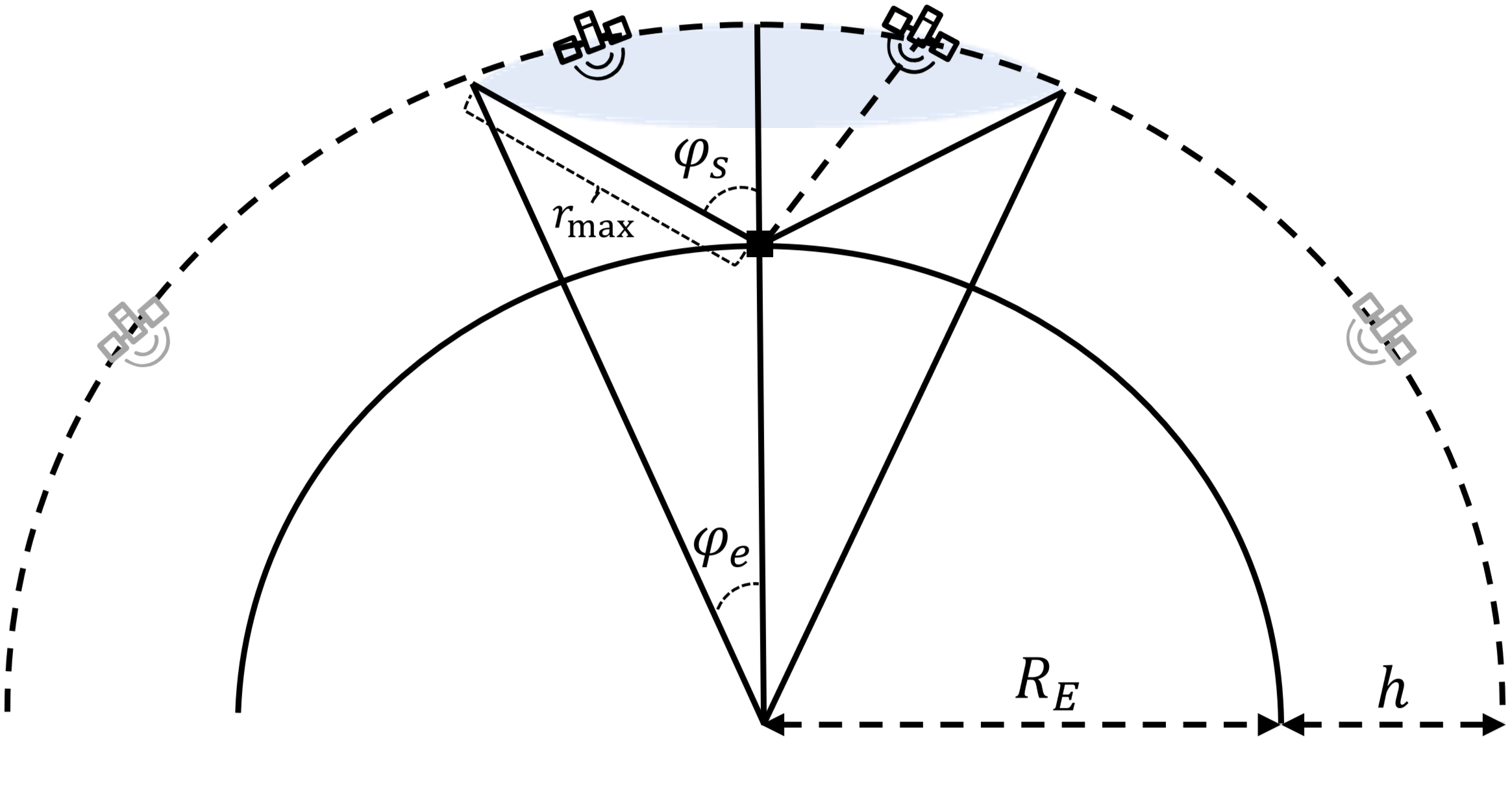}
        \label{fig:sketch}
    }\hfil
    \subfloat[Node motion from the satellite's perspective.]{
    
        \includegraphics[width=0.33\linewidth]{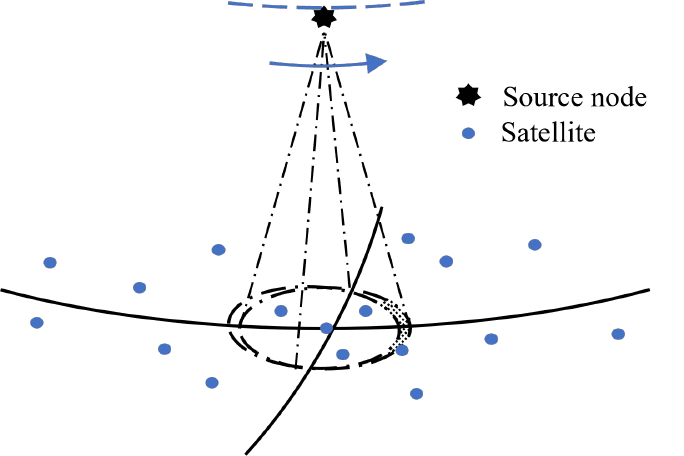}
        \label{fig:user}
    }
    \caption{Illustration of the considered NTN traffic link: (a) Side view of the NTN, with a shaded area containing satellites whose distance from the source node is smaller than $r_\mathrm{max}$; and (b) NTN as seen from the standpoint of the source node, with a shaded area indicating the relative movement of the satellites.}
    \label{fig:ntn_views}
    \vspace{-0.4cm}
\end{figure*}


At time $t$, let $r(t)$ denote the distance between the source node and the closest satellite. The corresponding SNR can be expressed as
\begin{align} \label{equ:SNR_exprss}
    \text{SNR}(t) = \frac{P_{\mathrm{tx}} \, r(t)^{-\alpha}}{\sigma^2}.
\end{align}

Since the source node can reliably connect to the satellite only when the SNR surpasses a minimum decoding threshold, i.e., $\text{SNR}(t)>\theta$, \eqref{equ:SNR_exprss} yields a maximum transmission distance for successful decoding, which can be expressed as
\begin{equation} \label{equ:r_max}
 r_\mathrm{max} =\min\left\{\left(\frac{P_{\mathrm{tx}}}{\sigma^2\theta}\right)^{\frac{1}{\alpha}},\sqrt{(R_E+h)^2-R_E^2}\right\}. 
\end{equation}
The second term in \eqref{equ:r_max} is the physical limit of \gls{LoS} obstruction caused by the Earth's curvature.
\color{black}
From the perspective of the source node, the maximum transmission distance separates the sky sphere into two parts, as depicted in Fig.~\ref{fig:sketch}: ($i$) the shaded area, corresponding to a dome in the sphere, whereby any satellite has a distance from the source node that is smaller than $r_\mathrm{max}$ and can thus establish a connection; ($ii$) the remaining area, containing satellites that are too far from the source node to provide connectivity. 

As illustrated in Fig.~\ref{fig:sketch}, the distance $r_\mathrm{max}$ is related to the node-centered zenith angle $\varphi_{ \mathrm{s} }$ and the Earth-centered zenith angle $ \varphi_{ \mathrm{e} }$. 
More concretely, for communication distance $r_\mathrm{max}$ strictly less than the maximum LoS distance, using the cosine theorem, $r_\mathrm{max}$ can be computed as
\begin{equation}
    r_\mathrm{max}^2=(R_E+h)^2+R_E^2-2R_E(R_E+h)\cos\varphi_e,
\end{equation}
where the earth-centered zenith angle $ \varphi_{ \mathrm{e}}$ is given by
\begin{equation}\label{eq:phie}
   \varphi_e= \cos^{-1}\left( \frac{(R_E+h)^2+R_E^2-\left( r_\mathrm{max}\right)^{2}}{2R_E(R_E+h)}\right),
\end{equation}
and the node-centered zenith angle $\varphi_{\mathrm{s}}$ is given by
\begin{align} \label{equ:phis}
    \varphi_{\mathrm{s}}&= \cos^{-1}\left( \frac{(R_E+h)\cos({ \varphi_{ \mathrm{e} })}-R_E}{r_\mathrm{max}}\right)\notag\\
    &= \cos^{-1}\left( \left( \frac{(2R_E+h)h}{2R_Er_\mathrm{max}}\right)-\frac{r_\mathrm{max}}{2R_E}\right),
\end{align}
where $r_\mathrm{max}$ is determined by \eqref{equ:r_max}. The above analyses (specifically, \eqref{eq:phie} and \eqref{equ:phis}) reveal that both the Earth-centered zenith angle $\varphi_e$ and the node-centered zenith angle $\varphi_s$ depend on the communication link parameters. Specifically, higher transmission power $P_\mathrm{tx}$ or a lower decoding threshold $\theta$ enables communication over longer distances, resulting in larger values of $\varphi_e$ and $\varphi_s$, thereby expanding the satellite's visibility region. Conversely, increasing the satellite altitude $h$ leads to greater path loss, reducing the maximum communication range at the same transmission power and consequently resulting in smaller effective angles $\varphi_e$ and $\varphi_s$.

From Fig.~\ref{fig:sketch}, we note that each satellite lying in the shaded dome region has a relative velocity with respect to the source node, owing to its orbital movement and the Earth's rotation. Such relative speed causes the satellite to pass through the shaded region rapidly. 
In what follows, we focus on the satellites within the dome region and their motion mode. While there are multiple satellite orbits, the maximum arc length within the specific dome region remains fixed. We assume that the velocity difference between the satellites and the source node remains constant. 
By considering relative motion, we can treat all satellites within the dome region as stationary in space, just as stars in the sky appear static for a short period. Meanwhile, the source node undergoes circular movements at a relative angular velocity $\omega$\cite{662636}:
\begin{align}\label{eq:omega}
    \omega &=\omega_s-\omega_e\cos(\, {i_o} \,)
    \nonumber\\
    &=\sqrt{\frac{GM_E}{(R_E+h)^3}}-\frac{2\pi}{T_E} \cos(\, {i_o} \,),
\end{align}
where $\omega_s$ denotes orbital angular velocity of satellite, $\omega_e$ represents the rotational angular velocity of earth, and $i_o$ is orbit plane inclination. $G$ denotes the gravitational constant, $M_E$ is the mass of the Earth, and $T_E$ represents the Earth’s rotational period.

If we view the on and off coverage stages of the source node from the perspective of a queuing system, then the duration for which a satellite remains within the dome region can be regarded as an on-service period. 
In contrast, the time between a satellite's departure from the dome region and the arrival of the subsequent satellite can be regarded as an off-service period. 
Due to the satellites' random positions, the durations of both the on- and off-service periods vary. 
Nonetheless, we can approximate these periods as independent and identically distributed. 
Moreover, we can leverage the techniques developed in \cite{madadi2017shared} to derive the distributions of the on- and off-service periods by accounting for the aforementioned approximations.

\begin{lemma} \label{lma:OnOff_dist}
The off-service periods follow an exponential distribution with density 
\begin{align}\label{eq:offperiod}
    \lambda_{ \mathrm{os}} = 2 \,\omega\,\lambda\,\sin (\varphi_{ \mathrm{e}}) \,\left(R_E+h\right)^2,
\end{align}
and the on-service periods have a common probability density function (PDF) given as
\begin{align}\label{eq:ONpdf}
f_{T_\mathrm{on}}(t)\!\!=\!\!
\begin{cases}\!
    \dfrac{\omega\cos(\varphi_{ \mathrm{e}})\tan(\frac{\omega t}{2})}{2 \varphi_{ \mathrm{e} }\sqrt{\sin^2 (\varphi_{ \mathrm{e} })-\sin^2(\frac{\omega t}{2}})}, & \mathrm{if}~t\!\in\! \left[0,\!\frac{2 \varphi_{ \mathrm{e} }}{\omega}\right],\! \\
    0, & \mathrm{otherwise},
\end{cases}
\end{align}
where $\varphi_{ \mathrm{e}}$ and $\omega$  are given in \eqref{eq:phie} and \eqref{eq:omega}, respectively.
\end{lemma}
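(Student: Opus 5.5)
The plan is to work entirely in the relative frame of \eqref{eq:omega}. There the satellites are frozen on the sphere of radius $R_E+h$, and the sub-node point moves along a great circle at angular speed $\omega$. The visibility dome is a spherical cap of angular radius $\varphi_e$, given by \eqref{eq:phie}, centred at the moving sub-node point. A satellite provides service exactly while it lies inside this moving cap. The on and off periods can therefore be read off from the geometry of the cap sweeping through a static homogeneous PPP of intensity $\lambda$, following the approach of \cite{madadi2017shared}.

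\emph{Off periods.} Over one relative revolution, of duration $2\pi/\omega$, the cap sweeps the spherical band of points within angular distance $\varphi_e$ of the great-circle track. This band has area $4\pi (R_E+h)^2\sin\varphi_e$. The swept area per unit time is therefore $2\omega\sin\varphi_e (R_E+h)^2$. By the PPP property, satellites enter the cap as a Poisson process in time with rate $\lambda_{\mathrm{os}}=2\omega\lambda\sin(\varphi_e)(R_E+h)^2$. Under single-satellite association, when a contact ends the source waits for the next entering satellite. By memorylessness of this Poisson entrance process, and adopting the stated i.i.d. approximation that neglects satellites already inside the cap, the off period is exponential with rate $\lambda_{\mathrm{os}}$, which gives \eqref{eq:offperiod}.

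\emph{On periods.} Consider a satellite whose angular offset from the track is $\beta\in[-\varphi_e,\varphi_e]$. It lies in the cap for a time $t$ during which the track advances by an arc $\omega t$. At the two boundary crossings the satellite is at angular distance $\varphi_e$ from the cap centre, and the right spherical triangle formed by the satellite, its foot on the track, and the cap centre gives
\begin{equation*}
\cos\varphi_e=\cos\beta\,\cos(\omega t/2).
\end{equation*}
This holds by the spherical Pythagorean theorem, with the cap centre at the two crossings lying symmetrically at $\pm\omega t/2$ from the foot. Hence $t\in[0,2\varphi_e/\omega]$, and $t$ is a monotone function of $|\beta|$. Following \cite{madadi2017shared}, I would take the offset $|\beta|$ of the serving satellite as uniform on $[0,\varphi_e]$. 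This is exactly the approximation the target formula encodes through its $1/\varphi_e$ normalisation. A change of variables $f_{T_{\mathrm{on}}}(t)=\frac{1}{\varphi_e}\,|d\beta/dt|$ then completes the computation. Writing $u=\omega t/2$ and differentiating, we get $\sin\beta\, d\beta=\cos\varphi_e\frac{\sin u}{\cos^2 u}\frac{\omega}{2}dt$. Moreover,
\begin{equation*}
\sin\beta=\frac{\sqrt{\cos^2u-\cos^2\varphi_e}}{\cos u}=\frac{\sqrt{\sin^2\varphi_e-\sin^2u}}{\cos u}.
\end{equation*}
Combining the two yields $\frac{\omega\cos\varphi_e\tan u}{2\varphi_e\sqrt{\sin^2\varphi_e-\sin^2u}}$, which is \eqref{eq:ONpdf}.

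The main obstacle is justifying the distribution of the crossing offset $\beta$. Strictly, the entry-weighted law of $\beta$ for the swept band should have density proportional to $\cos\beta$, reflecting the band's width element on the sphere. The uniform law is exact only in the small-cap (locally flat) regime, which is the regime of interest since $\varphi_e$ is small for LEO. I would first try to make this flat approximation explicit, noting that $\cos\beta\approx 1$ for $|\beta|\le\varphi_e$. Next, I would check that the resulting density integrates to one: substituting $s=\sin u$ reduces the integral to the elementary form $\int_0^{\varphi_e}d\beta/\varphi_e=1$. Finally, I would state the i.i.d. and independence of on and off periods as the modelling approximations announced before the lemma, rather than attempting to prove them.
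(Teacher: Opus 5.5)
Your proposal is correct and follows essentially the same route as the paper. The off-period rate comes from the area swept by the moving cap through the homogeneous PPP; you obtain it from the band area per relative revolution rather than the paper's infinitesimal strip $A_\epsilon$, which is equivalent. The on-period density comes from a uniformly distributed entry offset combined with $\cos\varphi_e=\cos\beta\cos(\omega t/2)$, which is just the paper's arcsin expression for $T_\mathrm{on}$ rewritten. Your caveat is also accurate: the entry-weighted offset law is really proportional to $\cos\beta$, so the uniform law is only a small-cap approximation, a point the paper's proof leaves implicit.
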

\begin{IEEEproof}
Please see the Appendix \ref{proof:ONOFF}.
\end{IEEEproof}

An immediate observation from \eqref{eq:offperiod} is that the off-service periods shorten as the satellite density increases (because the average off-service period is $1/\lambda_{ \mathrm{os} }$). 
Similarly, these periods decrease as the earth-centered zenith angle of the visible spherical cap increases.

\subsection{Semi-Markov Modeling of the Energy Dynamics}
In this subsection, we construct a continuous-time semi-Markov process to model the state transitions of the energy buffer and satellite connectivity. 
Notably, due to the employed connectivity probing strategy (which consumes one unit of energy), the state transition of the energy buffer is affected by the connectivity state (e.g., whether a satellite is available or not upon the initiation of an update attempt).
Moreover, because the on period of satellite connections does not follow an exponential distribution (cf. \eqref{eq:ONpdf}), the memoryless property does not hold for the on state. 
To cope with this issue, in addition to the energy buffer and satellite connectivity states, we introduce a supplementary variable that captures the elapsed time since the beginning of the current on period of a satellite connection.
As such, we represent the system state at any time $t$ by the following tuple:
\begin{equation}
   X(t)=\big(S(t),E(t),A(t)\big),
\end{equation}
where $S(t) \in \{0,1\}$ denotes the satellite channel state, with $S(t)=0$ corresponding to the off state and $S(t)=1$ corresponding to the on state, $E(t)\in \{0,1,...B\}$ is the number of energy units stored in the buffer, and $A(t)\in[0,\infty)$ indicates the time elapsed since the on stage of the current connection. 
Note that $A(t)$ is only relevant when $S(t)=1$, while it is set to $0$ when $S(t)=0$.

Next, we characterize the state transitions over an infinitesimal time interval $dt$, under the assumption that at most one state transition occurs within $dt$. For ease of presentation, the system state is represented by $(1,e,a)$ when the satellite connection is on and by $(0,e)$ when it is off. The detailed transitions are given as follows:
\begin{itemize}
    \item If $S(t)=\text{1}$, the following state transitions are possible:
    \begin{itemize}
       \item[(1)] \textit{Energy harvesting}: If $e<B$, and an energy unit arrives with rate $\xi$, yielding the transition $(1,e,a)\to(1,e+1,a)$. 
           
        \item[(2)] \textit{Probe-then-transmission}: When $e \geq N+1$, the source node initiates a probe attempt and then transmission with rate $\mu$. Since the system is in the on state, the transmission succeeds and consumes $N+1$ energy units, resulting in $(1,e,a)\to(1,e-N-1,a)$.
           
       \item[(3)] \textit{On-to-off transition}: The on state terminates with hazard rate $h(a)$, leading to the transition $(1,e,a)\to(0,e)$. The hazard rate is given by
       \begin{equation}
       \begin{split}
          h(a)&\triangleq \mathrm{Pr}(T_\mathrm{on}\in[a,a+dt]~|~T_\mathrm{on}>a)\\
          &=\frac{f_{T_\mathrm{on}}(a)}{\bar{F}(a)},
       \end{split}
       \end{equation}
       in which
       \begin{equation}\label{eq:Fa}
    \bar{F}(a)=\frac{1}{\varphi_e}\mathrm{cos}^{-1}\left(\frac{\mathrm{cos}(\varphi_e)}{\mathrm{cos}(\omega a/2)}\right),~~0\leq a\leq T_\mathrm{max},
\end{equation}
where $T_\mathrm{max}=\frac{2 \varphi_{ \mathrm{e} }}{\omega}$. The quantity $h(a)$ represents the instantaneous transition rate of leaving the on state after having remained in the on state for $a$ seconds.
    \end{itemize}
    \item If $S(t)=\text{0}$, the following state transitions are possible:
    \begin{itemize}
        \item[(1)] \textit{Energy harvesting}: If $e<B$, and an energy unit arrives with rate $\xi$, yielding the transition $(0,e)\to(0,e+1)$.
        \item[(2)] \textit{Probe-then-silent}: If $e\geq N+1$, the source node initiates a probing attempt with rate $\mu$. Since the system is in the off state, the probe fails, and only one energy unit is consumed, resulting in $(0,e)\to (0,e-1)$
        \item[(3)] \textit{Off-to-on transition}: The off state transitions to the on state with rate $\lambda_{ \mathrm{os}}$, leading to $(0,e)\to(1,e,0)$, where the supplementary variable is reset to zero.
    \end{itemize}
\end{itemize}
Consequently, we can establish the energy transition rate matrix 
over an infinitesimal time interval in the satellite connection on stage, denoted by  
$\mathbf{Q}_1\in\mathbb{R}^{(B+1)\times(B+1)}$, as follows: 
\begin{equation}\label{eq:Q1}
   \begin{cases}
       [\mathbf{Q}_1]_{e,e+1}= \xi,~~~~~~e<B, \\
       [\mathbf{Q}_1]_{e,e-N-1}= \mu,~~e\geq N+1,\\
       [\mathbf{Q}_1]_{e,e}= -\xi\mathbb{I}(e<B)-\mu\mathbb{I}(e\geq N+1),
   \end{cases}
\end{equation}
where all the remaining entries are zero, and $\mathbb{I}(\mathcal{C})$ denotes the indicator function, which can be expressed as
\begin{equation}
    \mathbb{I}(\mathcal{C})=\begin{cases}
        1,~~&\text{if condition $\mathcal{C}$ holds,}\\
        0,~~&\text{otherwise}.
    \end{cases}
\end{equation}

Likewise, we construct the following energy transition rate matrix, $\mathbf{Q}_0\in\mathbb{R}^{(B+1)\times(B+1)}$, over an infinitesimal time interval in the satellite connection off stage
\begin{equation}\label{eq:Q0}
   \begin{cases}
       [\mathbf{Q}_0]_{e,e+1}= \xi,~~e<B, \\
       [\mathbf{Q}_0]_{e,e-1}= \mu,~~e\geq N+1,\\
       [\mathbf{Q}_0]_{e,e}= -\xi\mathbb{I}(e<B)-\mu\mathbb{I}(e\geq N+1)
   \end{cases}
\end{equation}
with the other entries set to zero.

Over a complete on (resp. off) period $t$ of satellite connection, as the connectivity state remains unchanged, the energy state dynamics according to $\mathbf{Q}_1$ (resp. $\mathbf{Q}_0$) during each infinitesimal time interval. 
Correspondingly, the energy state transition matrices over a duration $t$ of on and off states can be calculated as $\mathbf{P}_1(t)=e^{\mathbf{Q}_1 t}$ and $\mathbf{P}_0(t)=e^{\mathbf{Q}_0 t}$, respectively.

Aided by the above state transition rate matrices, we can analyze the steady-state behavior of the system. 
Specifically, the energy steady state probability of the off state is given by
\begin{equation}
    S_{e}^{0}=\lim_{t\to \infty}\mathrm{Pr}(S(t)=\text{0},E(t)=e),
\end{equation}
and energy steady state probability of on state is 
\begin{equation}
    S_{e}^{1}(a)da=\lim_{t\to \infty}\mathrm{Pr}(S(t)=\text{1}, E(t)=e, A(t)\in[a,a+da]).
\end{equation}

The supplementary-variable formulation provides an exact Markovian description by augmenting the joint channel-energy state with the elapsed time of on period. This turns the original finite-state model into a hybrid-state model with a continuous variable, and the stationary distribution is no longer obtained from finite-dimensional balance equations. It becomes a set of time-dependent density functions, whose evolution is governed by supplementary-variable differential equations with boundary integrals at the on-off switching epochs, making the direct solution highly complex. To facilitate the derivation, we consider the embedded process at the on-off switching epochs.

Particularly, using Lemma~\ref{lma:OnOff_dist}, we take an expectation on the on/off duration of the satellite connections, arriving at the following energy transition matrices across each (average) on and off connection period.
More precisely, from the beginning to the end of an on period, the energy transition matrix is
\begin{equation}\label{eq:P1}
    \mathbf{P}_1=\int_0^{T_{\max}} f_{T_\mathrm{on}}(t)e^{\mathbf{Q}_1t} dt,
\end{equation}
where $f_{T_\mathrm{on}}(t)$ is expressed in \eqref{eq:ONpdf}.
And the transition matrix of the energy state from the beginning to the end of an off period can be computed as
\begin{equation}\label{eq:P0}
\begin{split}
    \mathbf{P}_0 = \int_0^{\infty}\lambda_{ \mathrm{os}} e^{-\lambda_{ \mathrm{os}} t}e^{\mathbf{Q}_0t}dt=\lambda_{ \mathrm{os}}(\lambda_{ \mathrm{os}}\mathbf{I}-\mathbf{Q}_0)^{-1}.
    \end{split}
\end{equation}

Consequently, the steady state of energy distribution, denoted by $\boldsymbol{\alpha}$, at the beginning of a typical off period can be obtained by solving the following system of equations
\begin{subequations}
\begin{align}
&\boldsymbol{\alpha}=\mathbf{\boldsymbol{\alpha}}\mathbf{P}_0 \mathbf{P}_1, \label{EMC:trans}\\
&\boldsymbol{\alpha1}=1\label{EMC:sum1},
\end{align}
\end{subequations}
where \eqref{EMC:trans} is the state transition of an on-off process cycle, and \eqref{EMC:sum1} is the normalization condition.

We then consider the occupation time of each energy state in one off period. After time $t$, the system is still in the off period, and the energy-state probability is $\boldsymbol{\alpha} e^{-\lambda_{ \mathrm{os}} t}e^{\mathbf{Q}_0t}$. Over a small time interval $[t,t+dt]$, the occupation time of each energy state is $\boldsymbol{\alpha} e^{-\lambda_{ \mathrm{os}} t}e^{\mathbf{Q}_0t}dt$. By accumulating this term over all $t\geq0$, the expected occupation-time vector of each energy state $e$ during one off period can be given by
\begin{equation}
   \mathbf{T}^{0}= \boldsymbol{\alpha} \int_0^{\infty} e^{-\lambda_{ \mathrm{os}} t}e^{\mathbf{Q}_0t}dt=\boldsymbol{\alpha}(\lambda_{ \mathrm{os}} \mathbf{I}-\mathbf{Q}_0)^{-1},
\end{equation}
and the sum of the occupation-time components $\mathbf{T}^{0}\mathbf{1}=1/\lambda_\mathrm{os}$.

Furthermore, the expectation of one on-off duration can be expressed as
\begin{equation}
    \mathbb{E}[L]=1/\lambda_{ \mathrm{os}}+\mathbb{E}[T_\mathrm{on}].
\end{equation}
Therefore, the steady state probability $\mathbf{S}^{0}=[S^0_0,S^0_1,...,S^0_B]$ can be given by
\begin{equation}\label{eq:S0}
    \mathbf{S}^{0}=\frac{\mathbf{T}^{0}}{\mathbb{E}[L]} =\frac{\boldsymbol{\alpha}(\lambda_{ \mathrm{os}} \mathbf{I}-\mathbf{Q}_0)^{-1}}{1/\lambda_{\mathrm{os}}+\mathbb{E}[T_\mathrm{on}]}.
\end{equation}
By using the fact $\mathbf{S}^{1}(a)=\lambda_{ \mathrm{os}}e^{\mathbf{Q}_1 a}\bar{F}(a)\mathbf{S}^{0}$, the steady state probability $\mathbf{S}^{1}(a)=[S^1_0(a),S^1_1(a),...,S^1_B(a)]$ can be given by
\begin{equation}\label{eq:S1}
     \mathbf{S}^{1}(a)=\frac{\boldsymbol{\alpha} \lambda_{ \mathrm{os}} e^{\mathbf{Q}_1 a}\bar{F}(a)(\lambda_{ \mathrm{os}} \mathbf{I}-\mathbf{Q}_0)^{-1}}{1/\lambda_{\mathrm{os}}+\mathbb{E}[T_\mathrm{on}]}.
\end{equation}

The basic idea behind the derivation from \eqref{eq:P1} to \eqref{eq:S1} is to first characterize the steady-state distribution of the embedded chain observed at the beginnings of off periods, and then use the off holding-time distributions to recover the steady-state distribution at an arbitrary time. By working with the embedded chain at renewal epochs, we avoid directly tracking the non-Markovian evolution within the on state induced by the general holding-time distribution.

\subsection{Analysis of the Time-Average AoI}
With the preparation above, we derive the time-average AoI in this part. 
Let $u_e$ denote the mean waiting time to the next successful update
starting from the off state with energy level $e$, and let $v_e(a)$
denote the corresponding mean waiting time starting from the on state
with energy level $e$ and on holding time $a$. Then, based on the steady-state results obtained from the semi-Markov analysis, the time-average AoI is given by
\begin{equation}\label{eq:AoI:EHMCanalysis}
    \bar{\Delta} = \sum_{e=0}^{B}S_e^0u_e+
\int_0^{T_{\max}}\sum_{e=0}^{B}S_e^1(a)v_e(a)\,da+3D.
\end{equation}

By applying the first-passage-time method, the analytical expression of the time-average AoI can be obtained as follows.
\begin{theorem}\label{Th:AoI:semi}
  The time-average AoI can be computed as
 \begin{equation}
\bar{\Delta}=\mathbf{S}^0\mathbf{u}+\lambda_{\mathrm{os}} \mathbf{S}^0
\int_0^{\frac{2\varphi_{ \mathrm{e} }}{\omega}}
e^{\mathbf{Q}_1a}\mathbf w_1(a)\,da+3D,
\end{equation}
where the off period waiting time vector $\mathbf{u}=[u_0,u_1,...,u_B]^{T}$ is given by
\begin{equation}\label{eq:u}
\mathbf{u}=-(\mathbf{Q}_0-\lambda_{\mathrm{os}} \mathbf{I}+\lambda_{\mathrm{os}} \mathbf{A})^{-1}
(\mathbf{1}+\lambda_{\mathrm{os}}\mathbf b),
\end{equation}
and
$$\mathbf{A}\!=\!\int_0^{\frac{2\varphi_{ \mathrm{e} }}{\omega}}e^{\mathbf{C}t}f(t)\,dt,
~~~
\mathbf{b}\!=\!\int_0^{\frac{2\varphi_{ \mathrm{e} }}{\omega}}e^{\mathbf{C}t}\bar F(t)\mathbf 1\,dt,
$$
with $\mathbf{w}_1(a)$ given by
\begin{equation}\label{eq:w1}
\mathbf{w}_1(a)=
\int_{a}^{\frac{2\varphi_{ \mathrm{e} }}{\omega}}e^{\mathbf{C}(t-a)}\left(\bar F(t)\mathbf 1+f(t)\mathbf{u}\right)dt,
\end{equation}
in which the entries in matrix $\mathbf{C}$ are
\begin{equation}\label{eq:C:define}
  [\mathbf{C}]_{e,j}\!=\!\begin{cases}
\xi, \quad\quad\quad j=e+1,~0\le e<B,\\[0.6ex]
-\xi\mathbb{I}(e\!<\!B)\!-\!\mu\mathbb{I}(e\!\geq\!N\!+\!1),
\quad j=e,~0\le e\le B,\\[0.6ex]
0,
\quad\quad\quad \text{otherwise},
\end{cases}
\end{equation}
and $\mathbf{Q}_0$, $\mathbf{Q}_1$, and $\mathbf{S}^{0}$ are given by \eqref{eq:Q0}, \eqref{eq:Q1} and \eqref{eq:S0}, respectively.
\end{theorem}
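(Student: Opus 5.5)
The plan is to start from \eqref{eq:AoI:EHMCanalysis}. Since $\mathbf{S}^0$ is already given by \eqref{eq:S0} and $\mathbf{S}^1(a)$ by \eqref{eq:S1}, the work reduces to three tasks: characterize $u_e$ and $v_e(a)$ through first-passage (backward) equations, solve those equations in matrix form, and substitute. Write $\mathbf{v}(a)=[v_0(a),\dots,v_B(a)]^T$. Here "successful update" means a probe-then-transmit event in the on state. Consequently, inside an on period we must track the process only until the first $\mu$-event at a level $e\ge N+1$, and the downward jump of $\mathbf{Q}_1$ never has to be followed. Killing that event leaves exactly the generator $\mathbf{C}$ of \eqref{eq:C:define}: upward harvesting at rate $\xi$, plus a loss rate $\mu\mathbb{I}(e\ge N+1)$ on the diagonal. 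This identification is the first step.

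\textbf{On-state equation.} Condition on the elapsed on-time $a$ and use the hazard $h(a)=f(a)/\bar F(a)$. Over $[a,a+dt]$ three things can happen: the on period ends (rate $h(a)$), after which the process restarts from the off state with vector $\mathbf{u}$; a success occurs, after which the remaining wait is $0$; or harvesting moves the energy up. This gives the backward equation
\[
-\tfrac{d}{da}\mathbf{v}(a)=\mathbf{1}+\mathbf{C}\mathbf{v}(a)+h(a)\big(\mathbf{u}-\mathbf{v}(a)\big),
\]
with terminal condition $\bar F(T_{\max})=0$. Multiplying by $\bar F(a)$ and setting $\mathbf{y}(a)=\bar F(a)\mathbf{v}(a)$ absorbs the hazard term, since $\bar F'=-f$. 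The equation becomes $-\mathbf{y}'=\bar F\mathbf{1}+\mathbf{C}\mathbf{y}+f\mathbf{u}$, whose solution by variation of constants is
\[
\mathbf{y}(a)=\int_a^{T_{\max}}e^{\mathbf{C}(t-a)}\big(\bar F(t)\mathbf{1}+f(t)\mathbf{u}\big)dt .
\]
This is precisely $\mathbf{w}_1(a)$ in \eqref{eq:w1}, so $\mathbf{v}(a)=\mathbf{w}_1(a)/\bar F(a)$. Setting $a=0$ gives $\mathbf{v}(0)=\mathbf{b}+\mathbf{A}\mathbf{u}$.

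\textbf{Off-state equation.} From the off state with energy $e$, the transitions are governed by $\mathbf{Q}_0$ (harvesting and failed probes), and the switch to the on state occurs at rate $\lambda_{\mathrm{os}}$ and lands at $a=0$. The backward equation is
\[
\mathbf{0}=\mathbf{1}+\mathbf{Q}_0\mathbf{u}+\lambda_{\mathrm{os}}\big(\mathbf{v}(0)-\mathbf{u}\big).
\]
Substituting $\mathbf{v}(0)=\mathbf{b}+\mathbf{A}\mathbf{u}$ and rearranging yields $(\mathbf{Q}_0-\lambda_{\mathrm{os}}\mathbf{I}+\lambda_{\mathrm{os}}\mathbf{A})\mathbf{u}=-(\mathbf{1}+\lambda_{\mathrm{os}}\mathbf{b})$, which is \eqref{eq:u}. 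The matrix is invertible provided a success eventually occurs from every state. This holds because $\xi>0$ drives the energy up to $B\ge N+1$, and each on period has positive probability of a success. Hence the killed generator is strictly substochastic, and the same argument shows $\mathbf{A}$ has spectral radius below one.

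\textbf{Assembly.} Substitute \eqref{eq:S1} into the integral term of \eqref{eq:AoI:EHMCanalysis}. The factor $\bar F(a)$ in $\mathbf{S}^1(a)=\lambda_{\mathrm{os}}\mathbf{S}^0e^{\mathbf{Q}_1a}\bar F(a)$ cancels the $1/\bar F(a)$ in $\mathbf{v}(a)$, so the integrand becomes $\lambda_{\mathrm{os}}\mathbf{S}^0e^{\mathbf{Q}_1a}\mathbf{w}_1(a)$. The upper limit is $T_{\max}=2\varphi_e/\omega$, and adding $\mathbf{S}^0\mathbf{u}+3D$ gives the claim.

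The main obstacle is justifying the on-state backward equation rigorously, given that the hazard $h(a)$ blows up as $a\to T_{\max}$. The $\bar F$-weighting is exactly what removes this singularity, so I would derive the equation for $\mathbf{y}$ directly as an integral renewal equation. Condition on the remaining on-duration having density $f(t)/\bar F(a)$ on $t>a$. The energy evolves under $e^{\mathbf{C}(t-a)}$ until time $t$, and at that point either the process restarts with $\mathbf{u}$ or the accumulated sojourn contributes $\bar F\mathbf{1}$. Deriving it this way avoids differentiating near the singularity.
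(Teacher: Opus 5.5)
Your proposal is correct and follows essentially the same route as the paper's Appendix~B: first-step equations for $\mathbf{u}$ and $\mathbf{v}(a)$, the killed generator $\mathbf{C}$ during on periods, the scaling $\mathbf{w}_1(a)=\bar F(a)\mathbf{v}(a)$ yielding $\mathbf{v}(0)=\mathbf{b}+\mathbf{A}\mathbf{u}$, substitution into the off-state equation, and cancellation of $\bar F(a)$ against $\mathbf{S}^1(a)=\lambda_{\mathrm{os}}\mathbf{S}^0e^{\mathbf{Q}_1a}\bar F(a)$. Your explicit derivation of the $\bar F$-weighted equation (via the hazard ODE or the renewal/Fubini form) and your invertibility remark fill in steps that the paper states without detail.
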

\begin{IEEEproof}
      Please see the Appendix \ref{eq:AoI:matrix}.
\end{IEEEproof}
Based on Theorem \ref{Th:AoI:semi}, we further present two special cases, i.e., the exponential distribution assumption and the energy-sufficient regime.

\begin{corollary}
 When the energy is sufficient, vector $\mathbf{S}^{0}$ degenerates into the probability $S^{0}$ that the system is in the off state at an arbitrary time, which is given by
 \begin{equation}
  S^{0} =1/(1+\lambda_\mathrm{os}\mathbb{E}[T_\mathrm{on}]),  
 \end{equation}
 and \eqref{eq:AoI:EHMCanalysis} can be simplified as
\begin{align}\label{eq:AoI:EnergyAlways}
\bar{\Delta} 
\!=\!\frac{1}{\mu}\!+\!\frac{1}{1\!+\!\lambda_{\mathrm{os}}\mathbb{E}[T_\mathrm{on}]}\left(\frac{1}{\mu}\!+\!\frac{1}{\lambda_{\mathrm{os}}(1-\mathcal{L}_\mathrm{on}(\mu))}\right)\!+\!3D,
\end{align}
where $\mathcal{L}_{\mathrm{on}}(\mu)=  \int_{0}^{\frac{2\varphi_{ \mathrm{e} }}{\omega}} e^{-\mu t} f_{T_\mathrm{on}}(t)\,dt.$
\end{corollary}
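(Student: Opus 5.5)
The plan is to specialize Theorem~\ref{Th:AoI:semi} to the energy-sufficient regime. Here ``energy sufficient'' means that the buffer never blocks an attempt, so every attempt (probe, or probe plus transmission) occurs at rate $\mu$ regardless of $e$. Under this assumption I will argue the following reductions.

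\emph{The energy coordinate decouples.} The rows of $\mathbf{Q}_0$ and $\mathbf{Q}_1$ sum to zero, so $e^{\mathbf{Q}_1 a}\mathbf{1}=\mathbf{1}$. All waiting-time vectors become constant in $e$, so only the channel state matters. The matrix $\mathbf{C}$ acts on constant vectors as the scalar $-\mu$, because the $\xi$ terms cancel, so $e^{\mathbf{C}t}\mapsto e^{-\mu t}$.

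\emph{The off-state probability.} By \eqref{eq:S0}, $\mathbf{S}^0\mathbf{1}=\mathbf{T}^0\mathbf{1}/\mathbb{E}[L]=(1/\lambda_{\mathrm{os}})/(1/\lambda_{\mathrm{os}}+\mathbb{E}[T_{\mathrm{on}}])$, which gives the claimed $S^0=1/(1+\lambda_{\mathrm{os}}\mathbb{E}[T_{\mathrm{on}}])$. This is simply the renewal-reward fraction of time spent off.

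\emph{The scalars in \eqref{eq:u}.} With the reduction above, $\mathbf{A}$ becomes $\mathcal{L}_{\mathrm{on}}(\mu)$. For $\mathbf{b}$, an integration by parts gives $\int_0^{T_{\max}} e^{-\mu t}\bar F(t)\,dt=(1-\mathcal{L}_{\mathrm{on}}(\mu))/\mu$. The $\mathbf{Q}_0$ term drops out, since it annihilates constants. Hence
\[
u=\frac{1+\lambda_{\mathrm{os}}(1-\mathcal{L}_{\mathrm{on}}(\mu))/\mu}{\lambda_{\mathrm{os}}(1-\mathcal{L}_{\mathrm{on}}(\mu))}=\frac{1}{\mu}+\frac{1}{\lambda_{\mathrm{os}}(1-\mathcal{L}_{\mathrm{on}}(\mu))},
\]
and the first term of the theorem becomes $S^0 u$.

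\emph{The on-state term.} By \eqref{eq:w1}, this term equals $\lambda_{\mathrm{os}}S^0\int_0^{T_{\max}}\!\int_a^{T_{\max}} e^{-\mu(t-a)}\big(\bar F(t)+f(t)u\big)\,dt\,da$. Exchanging the order of integration, the inner $a$-integral yields $(1-e^{-\mu t})/\mu$. I then need two facts:
\[
\int_0^{T_{\max}}(1-e^{-\mu t})\bar F(t)\,dt=\mathbb{E}[T_{\mathrm{on}}]-\frac{1-\mathcal{L}_{\mathrm{on}}(\mu)}{\mu},\qquad \int_0^{T_{\max}}(1-e^{-\mu t})f(t)\,dt=1-\mathcal{L}_{\mathrm{on}}(\mu).
\]
Combining these with $(1-\mathcal{L}_{\mathrm{on}}(\mu))u=1/\lambda_{\mathrm{os}}+(1-\mathcal{L}_{\mathrm{on}}(\mu))/\mu$, the double integral collapses to $(\mathbb{E}[T_{\mathrm{on}}]+1/\lambda_{\mathrm{os}})/\mu$. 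Multiplying by $\lambda_{\mathrm{os}}S^0$ then gives exactly $1/\mu$. Adding $S^0u$ and $3D$ reproduces \eqref{eq:AoI:EnergyAlways}.

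The main obstacle is making the degeneration in the first step rigorous. This means justifying that ``energy sufficient'' lets us replace $\mathbf{C}$ by $-\mu$ acting on constant vectors, effectively taking $B\to\infty$ with all mass above $N+1$, and that $\mathbf{S}^0$ collapses to its scalar sum. I would handle this by noting that in that regime the indicator $\mathbb{I}(e\ge N+1)$ equals one on the support of the stationary law, and that every vector involved is constant in $e$. After that, the remaining work is the Fubini and integration-by-parts bookkeeping above.
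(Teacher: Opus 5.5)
Your proposal is correct and is essentially the paper's proof: the paper likewise collapses to the scalar model $\mathbf{Q}_0=\mathbf{Q}_1=0$, $\mathbf{C}=-\mu$, and simply states $u=\frac{1}{\mu}+\frac{1}{\lambda_{\mathrm{os}}(1-\mathcal{L}_{\mathrm{on}}(\mu))}$ and $\lambda_{\mathrm{os}}S^0\int_0^{T_{\max}}w_1(a)\,da=\frac{1}{\mu}$; your integration by parts for $\mathbf{b}$ and your Fubini computation fill in exactly the steps it omits. The paper treats this scalar collapse as the meaning of ``energy sufficient'' rather than justifying it, so it is no more rigorous than you on the idealization you flag. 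If you do want to justify the collapse, a limit such as $\xi\to\infty$ is the cleaner route: concentration of the stationary law on $e\ge N+1$ alone does not make $u_e$ constant, because $u_e$ is determined by paths that can drop below $N+1$.
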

\begin{IEEEproof}
 In this regime, the energy dimension vanishes and all energy-state transition rate matrices collapse to scalars, i.e., $\mathbf{Q}_0=0$, $\mathbf{Q}_1=0$, and $\mathbf{C}=-\mu$. The off period waiting time vector $
\mathbf{u}=\frac{1}{\lambda_{\mathrm{os}}\left(1-\mathcal L_{\mathrm{on}}(\mu)\right)}
+\frac{1}{\mu}$, and $\lambda_{\mathrm{os}} \mathbf{S}^0
\int_0^{\frac{2\varphi_{ \mathrm{e} }}{\omega}}
e^{\mathbf{Q}_1a}\mathbf w_1(a)\,da=\lambda_{\mathrm{os}} S^{0}\int_0^{\frac{2\varphi_{ \mathrm{e} }}{\omega}}w_1(a)da=\frac{1}{\mu}$. Applying these identities to the transition matrices and stationary distributions yields the stated result.
\end{IEEEproof}

The result is consistent with \cite{yanwuGlobecom}, except for the propagation delay, since the probe overhead also needs to be considered.

\begin{corollary}
   When the on period of satellite connections is approximated by an exponential distribution with the same average holding time, \eqref{eq:AoI:EHMCanalysis} can be simplified as
  \begin{align}
    \bar{\Delta}=&\lambda_{\mathrm{os}}\mathbf{S}^0(\lambda_1 \mathbf{I}-\mathbf{Q}_1)^{-1}(\lambda_1\mathbf{I}-\mathbf{C})^{-1}(\mathbf{1}+\lambda_1 \mathbf{u})+\mathbf{S}^0\mathbf{u}\!+\!3D,
  \end{align} 
  where $\mathbf{u}$ is given by
  \begin{equation}
    \mathbf{u}=-((\lambda_1\mathbf{I}-\mathbf{C})(\mathbf{Q}_1-\lambda_\mathrm{os}\mathbf{I})+\lambda_\mathrm{os}\lambda_1 \mathbf{I})^{-1}((\lambda_1\mathbf{I}-\mathbf{C})\mathbf{1}+\lambda_\mathrm{os}\mathbf{1}),
  \end{equation}
and $\lambda_1$ is the rate of the exponential distribution with the same mean as the original on period, i.e., $\lambda_1=1/\mathbb{E}[T_\mathrm{on}]$. 
\end{corollary}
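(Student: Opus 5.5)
This corollary is a specialization of Theorem~\ref{Th:AoI:semi}: replace $f_{T_\mathrm{on}}$ by the exponential density $f(t)=\lambda_1 e^{-\lambda_1 t}$ with survival function $\bar F(t)=e^{-\lambda_1 t}$ on $[0,\infty)$, where $\lambda_1=1/\mathbb{E}[T_\mathrm{on}]$. Then evaluate the resulting matrix integrals in closed form. The hazard rate becomes the constant $h(a)=\lambda_1$, so the on-state waiting time $v_e(a)$ no longer depends on $a$. Each integral of the form $\int_0^\infty e^{\mathbf{M}t}e^{-\lambda_1 t}\,dt$ reduces to a resolvent $(\lambda_1\mathbf{I}-\mathbf{M})^{-1}$. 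This is valid because $\mathbf{C}$ and $\mathbf{Q}_1$ have spectra in the closed left half-plane while $\lambda_1>0$, so the integrals converge.

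First, I would compute $\mathbf{A}$ and $\mathbf{b}$. This gives
\begin{equation*}
\mathbf{A}=\lambda_1(\lambda_1\mathbf{I}-\mathbf{C})^{-1},\qquad \mathbf{b}=(\lambda_1\mathbf{I}-\mathbf{C})^{-1}\mathbf{1}.
\end{equation*}
Substituting into \eqref{eq:u}, I obtain $(\mathbf{Q}_0-\lambda_\mathrm{os}\mathbf{I}+\lambda_\mathrm{os}\lambda_1(\lambda_1\mathbf{I}-\mathbf{C})^{-1})\mathbf{u}=-(\mathbf{1}+\lambda_\mathrm{os}(\lambda_1\mathbf{I}-\mathbf{C})^{-1}\mathbf{1})$. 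Left-multiplying by $(\lambda_1\mathbf{I}-\mathbf{C})$ clears the inner inverse and yields the stated expression for $\mathbf{u}$. One point needs care: the corollary writes $\mathbf{Q}_1$ where the theorem has $\mathbf{Q}_0$ in this step. I would trace the off-state first-passage equation to check which generator governs the off period. If it is $\mathbf{Q}_0$, the statement's $\mathbf{Q}_1$ is a notational slip; otherwise the identity follows verbatim. The order of the factors $(\lambda_1\mathbf{I}-\mathbf{C})$ and $(\mathbf{Q}-\lambda_\mathrm{os}\mathbf{I})$ also matters, since they need not commute. Left-multiplication produces exactly $(\lambda_1\mathbf{I}-\mathbf{C})(\mathbf{Q}-\lambda_\mathrm{os}\mathbf{I})+\lambda_\mathrm{os}\lambda_1\mathbf{I}$, as stated.

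Next, I would compute $\mathbf{w}_1(a)$. With $T_{\max}\to\infty$ and the substitution $s=t-a$,
\begin{equation*}
\mathbf{w}_1(a)=e^{-\lambda_1 a}\int_0^\infty e^{(\mathbf{C}-\lambda_1\mathbf{I})s}(\mathbf{1}+\lambda_1\mathbf{u})\,ds=e^{-\lambda_1 a}(\lambda_1\mathbf{I}-\mathbf{C})^{-1}(\mathbf{1}+\lambda_1\mathbf{u}).
\end{equation*}
The second term of Theorem~\ref{Th:AoI:semi} then becomes $\lambda_\mathrm{os}\mathbf{S}^0\int_0^\infty e^{(\mathbf{Q}_1-\lambda_1\mathbf{I})a}\,da\,(\lambda_1\mathbf{I}-\mathbf{C})^{-1}(\mathbf{1}+\lambda_1\mathbf{u})$. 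Evaluating the $a$-integral gives $\lambda_\mathrm{os}\mathbf{S}^0(\lambda_1\mathbf{I}-\mathbf{Q}_1)^{-1}(\lambda_1\mathbf{I}-\mathbf{C})^{-1}(\mathbf{1}+\lambda_1\mathbf{u})$. Adding $\mathbf{S}^0\mathbf{u}+3D$ gives the claimed formula.

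The main obstacle is the $\mathbf{u}$ step: matching the exact form, including the factor ordering and the $\mathbf{Q}_0$-versus-$\mathbf{Q}_1$ discrepancy. This requires rederiving the first-passage equations of Appendix~\ref{eq:AoI:matrix} under exponential holding. A second point to verify is invertibility of $(\lambda_1\mathbf{I}-\mathbf{C})$ and $(\lambda_1\mathbf{I}-\mathbf{Q}_1)$. The first is upper triangular with positive diagonal entries. For the second, $\mathbf{Q}_1$ is a generator, so its eigenvalues have nonpositive real part and $\lambda_1>0$ lies outside the spectrum.
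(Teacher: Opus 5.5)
Your proposal is correct and is the paper's own argument written out in detail: substitute $f(t)=\lambda_1e^{-\lambda_1t}$ and $\bar F(t)=e^{-\lambda_1t}$ into Theorem~\ref{Th:AoI:semi} and turn each holding-time integral into a resolvent, as you do for $\mathbf{A}$, $\mathbf{b}$, $\mathbf{w}_1(a)$ and the $a$-integral, with $\mathbf{S}^0$ understood as recomputed from $\mathbf{P}_1=\lambda_1(\lambda_1\mathbf{I}-\mathbf{Q}_1)^{-1}$. The $\mathbf{Q}_0$-versus-$\mathbf{Q}_1$ question need not be left conditional: the off-state first-passage equation in Appendix~\ref{eq:AoI:matrix} is $\mathbf{Q}_0\mathbf{u}+\lambda_{\mathrm{os}}(\mathbf{v}(0)-\mathbf{u})=-\mathbf{1}$ (a failed probe costs one unit), so your left-multiplication gives the stated formula with $\mathbf{Q}_0$, and the $\mathbf{Q}_1$ printed in the corollary's expression for $\mathbf{u}$ is a typo that the paper's one-line proof does not catch (since $\mathbf{Q}_0=\mathbf{Q}_1$ only when $N=0$).
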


\begin{IEEEproof}
Under the exponential approximation, the satellite connection on period becomes memoryless. Hence, the on holding time variable is no longer needed, and the semi-Markov process of the connection and energy states reduces to a finite-state CTMC. Substituting $\bar F(t)=e^{-\lambda_1t}$ and $f(t)=\lambda_1e^{-\lambda_1t}$ into the semi-Markov expressions converts all on holding time integrals into matrix resolvents. Specifically, 
$\int_0^{\infty} e^{\mathbf A t}f(t)dt=
\lambda_1(\lambda_1\mathbf I-\mathbf A)^{-1},$ and $\int_0^{\infty} e^{\mathbf A t}\bar F(t)dt=(\lambda_1\mathbf I-\mathbf A)^{-1}$. 
Applying these identities to the transition matrices, stationary distributions, and first-passage-time equations gives the stated result.
\end{IEEEproof}
The exponential assumption is a standard modeling simplification that reduces analytical complexity. By ignoring the determinisn of the satellite trajectory induced on process, this approximation converts the integral terms associated with the on period into matrix resolvents.

\subsection{Low-Complexity Time-average AoI Approximation}
It can be observed that jointly characterizing the channel and energy dynamics yields an accurate model but requires computing several matrix integrals. As an engineering approximation, we next consider using the mean on holding probability $P_\mathrm{on}$ and adopt a mean-field approach to derive a low-complexity approximation, which will be compared with the previously developed exact model.

Beyond providing a simple approximation, we aim to investigate whether, when a high-dimensional CTMC describing two coupled processes is analytically intractable, the detailed state transitions of one process can be replaced by its steady-state probability to reduce complexity, and under what conditions such a reduction is valid.

Specifically, the steady state probability of the source node being in the on service state can be computed as
\begin{equation}\label{eq:Pon}
 \mathrm{P_{on}}\!=\!\frac{ 2 \,\omega\,\lambda\,\sin (\varphi_{ \mathrm{e}}) \,\left(R_E+h\right)^2\int_{0}^{\frac{2 \varphi_{ \mathrm{e} }}{\omega}} tf_{T_\mathrm{on}}(t)dt}{1+ 2 \,\omega\,\lambda\,\sin (\varphi_{ \mathrm{e}}) \,\left(R_E+h\right)^2\int_{0}^{\frac{2 \varphi_{ \mathrm{e} }}{\omega}}tf_{T_\mathrm{on}}(t)dt}.
\end{equation}

Based on the result of the satellite on service periods, we approximate the successful probing probability by the long-term on probability of the satellite channel. This allows us to further characterize the energy dynamics of the node while decoupling it from the on–off state transition process.

\color{black}
\begin{figure}[t]
    \centering
    \includegraphics[width=\linewidth]{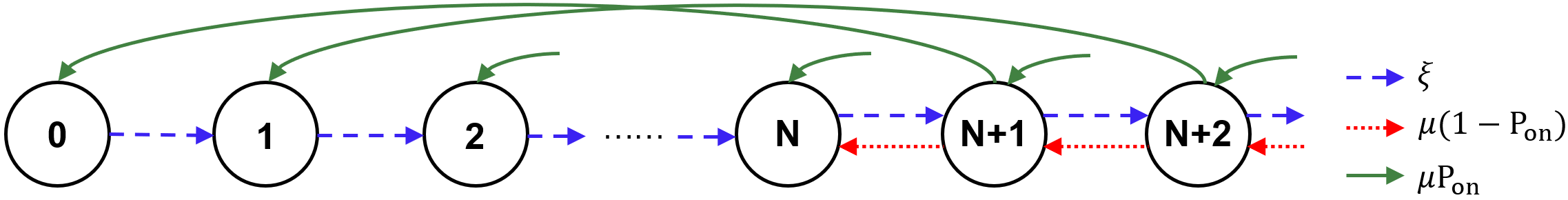}
    \caption{State-transition diagram of the approximation Markov chain describing the dynamics of energy harvesting and consumption.}
    \label{fig:MarkovChain}
    \vspace{-0.5cm}
\end{figure}
We then establish a one-dimensional continuous-time Markov chain (CTMC) to model the state transitions of the energy buffer, based on which we derive an analytical expression for the steady-state distribution.

Specifically, the state space of the CTMC is represented by the energy level $i \in \mathbb{N}$, where $i$ corresponds to the number of energy units stored in the buffer. The state transitions of the CTMC are summarized as follows:
\begin{itemize}
    \item If $0\leq i\leq N$: the energy storage is insufficient to support a status update (which includes connection probing and data transmission). The only possible state transition is $i\to i+1$ with rate $\xi$, representing an energy unit arrival.
    \item If $i \geq N+1 $: the energy storage is sufficient to support (at least) a status update, and an incoming energy unit can be stored in the energy buffer. Three transitions are possible:
    \begin{itemize}
        \item[1)] $i\to i+1$ with rate $\xi$: This corresponds to the arrival of a new energy unit.
        \item[2)] $i\to i-1$ with rate $\mu(1-\mathrm{P_{on}})$: This occurs when the node consumes one energy unit for probe but finds no available satellite in the visibility region (an off state).
        \item[3)] $i\to i-N-1$ with rate $\mu \mathrm{P_{on}}$: This occurs when the node consumes one energy unit for probe, finds an available satellite (an on state), and subsequently expends $N$ additional units for transmission.
    \end{itemize}

     \item If $ i= B$: the energy storage is sufficient to support (at least) a status update, but the incoming energy unit cannot store at energy buffer due to the energy buffer is full. In this scenario, two transitions are possible:
          \begin{itemize}
              \item[1)] $i\to i-1$ with rate $\mu(1-\mathrm{P_{on}})$: This occurs when the node consumes one energy unit for probe but finds no available satellite in the visibility region (an off state).
         \item[2)] $i\to i-N-1$ with rate $\mu \mathrm{P_{on}}$: This occurs when the node consumes one energy unit for probe, finds an available satellite (an on state), and subsequently expends $N$ additional units for transmission.
          \end{itemize}
\end{itemize}
Consequently, the dynamics of energy harvesting and consumption are captured by the graphical representation in Fig.~\ref{fig:MarkovChain}. Since the CTMC has a finite state space and is irreducible, all states are positive recurrent, and there exists a unique stationary distribution.
As such, we denote by $S_i$ the probability that the state of the energy buffer of node is $i$ when the system enters the steady state. The following lemma provides an analytical characterization of this quantity.

\begin{lemma}
  The steady-state distribution of the energy buffer state with finite energy buffer capacity can be approximated by \eqref{eq:S0:app}
  \begin{figure*}
     \begin{equation}\label{eq:S0:app}
      S_0=\left[\frac{N+1}{1-z}-\frac{z(1-z^{N+1})}{(1-z)^2}+\frac{1-P_{\mathrm{on}}}{P_{\mathrm{on}}}+\frac{\xi}{\mu P_{\mathrm{on}}}\frac{1-z^{B-2N-1}}{1-z}+\Gamma_B
\frac{\left(\frac{\mu}{\xi}-r_2\right)\frac{1-r_1^{N+1}}{1-r_1}+\left(r_1-\frac{\mu}{\xi}\right)
\frac{1-r_2^{N+1}}{1-r_2}}{r_1-r_2}
\right]^{-1}
  \end{equation} 
  \hrule
  \vspace{-0.4cm}
  \end{figure*}
and
\begin{equation}
S_i\!=\!
\begin{cases}
\dfrac{1-z^{i+1}}{1-z}S_0,
& 0\le i\le N-1,\\[2.5ex]
\left(
\dfrac{1-z^{N+1}}{1-z}
+\dfrac{1-P_{\mathrm{on}}}{P_{\mathrm{on}}}
\right)S_0,
& i=N,\\[3ex]
\dfrac{\xi}{\mu P_{\mathrm{on}}}S_0z^{i-N-1},
& N\!+\!1\le i\le B\!-\!N\!-\!1,\\[3ex]
\Gamma_B\Psi_{B-i}S_0,
& B-N\le i\le B,
\end{cases}
\end{equation}
where $\mathrm{P_{on}}$ is given in \eqref{eq:Pon} and $z$ is the non-negative dominant root of the following equation:
  \begin{equation} \label{equ:root_z}
    \mu \mathrm{P_{on}}z^{N+2}+\mu (1-\mathrm{P_{on}})z^2-(\xi+\mu)z+\xi =0,
\end{equation}
and $\Gamma_B$, $\Psi_i$, $r_{1}$, and $r_{2}$ are given respectively as follows:
\begin{equation}\label{eq:GammaB}
    \Gamma_B=\frac{\xi}{\mu P_{\mathrm{on}}}\cdot\frac{(1-P_{\mathrm{on}})z^{B-2N-1}+P_{\mathrm{on}}z^{B-N-1}}{(1-P_{\mathrm{on}})\Psi_N+P_{\mathrm{on}}},
\end{equation}

\begin{equation}
    \Psi_i=\frac{
\left(\frac{\mu}{\xi}-r_2\right)r_1^{i}+\left(r_1-\frac{\mu}{\xi}\right)r_2^{i}}{r_1-r_2},
\end{equation}
\begin{equation}\label{eq:r1}
    r_{1}=\frac{\xi+\mu +\sqrt{(\xi+\mu)^2-4\xi\mu(1-P_{\mathrm{on}})}}{2\xi},
\end{equation}
\begin{equation}\label{eq:r2}
    r_{2}=\frac{\xi+\mu -\sqrt{(\xi+\mu)^2-4\xi\mu(1-P_{\mathrm{on}})}}{2\xi}.
\end{equation}
\end{lemma}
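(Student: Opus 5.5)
We solve the global balance equations $\mathbf{S}\mathbf{Q}=\mathbf{0}$ of the one-dimensional CTMC in Fig.~\ref{fig:MarkovChain}. The most convenient route is through cut (level-crossing) equations. For a cut between levels $i$ and $i+1$, the upward flow $\xi S_i$ must equal the downward flow: unit drops from level $i+1$ at rate $\mu(1-P_{\mathrm{on}})$, plus jumps of size $N+1$ from every level $j\in[i+1,i+N+1]$ with $j\ge N+1$ at rate $\mu P_{\mathrm{on}}$. This gives the first-order relations
\begin{equation*}
\xi S_i=\mu(1-P_{\mathrm{on}})S_{i+1}\mathbb{I}(i+1\ge N+1)+\mu P_{\mathrm{on}}\sum_{j=\max(i+1,N+1)}^{\min(i+N+1,B)}S_j .
\end{equation*}
We treat three regimes separately: the lower boundary $0\le i\le N$, the interior $N+1\le i\le B-N-1$, and the upper boundary $B-N\le i\le B$. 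The normalization $\sum_i S_i=1$ then fixes $S_0$.

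\emph{Interior.} In the interior we postulate the geometric form $S_i=c\,z^{i-N-1}$. Substituting it into the global balance at a level $i\ge N+1$ away from both boundaries gives
\begin{equation*}
(\xi+\mu)S_i=\xi S_{i-1}+\mu(1-P_{\mathrm{on}})S_{i+1}+\mu P_{\mathrm{on}}S_{i+N+1}.
\end{equation*}
Dividing by $z^{i-N-2}$ yields exactly \eqref{equ:root_z}. The equation has the trivial root $z=1$, and we take the other nonnegative root, which lies in $(0,1)$ when the drift is negative. We then match the interior to the lower boundary using the cut equation at $i=N$: it gives $\xi S_N=\mu(1-P_{\mathrm{on}})S_{N+1}+\mu P_{\mathrm{on}}\sum_{j=N+1}^{2N+1}S_j$.

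\emph{Lower boundary.} For $i<N$ the cut equations involve only jumps landing at or below $i$, since no unit drops occur there. Using the balance at levels $0,\dots,N$, where only arrivals leave and jumps of size $N+1$ from levels $N+1,\dots,2N+1$ enter, we obtain the relation $S_i-S_{i-1}=z^i S_0$. This produces the partial geometric sums $\frac{1-z^{i+1}}{1-z}S_0$. The extra term $\frac{1-P_{\mathrm{on}}}{P_{\mathrm{on}}}S_0$ at $i=N$ comes from the unit drop $N+1\to N$, which exists only at level $N$. The interior constant $c=\frac{\xi}{\mu P_{\mathrm{on}}}S_0$ follows from the cut at level $0$: $\xi S_0=\mu P_{\mathrm{on}}S_{N+1}$, because only level $N+1$ jumps down to $0$.

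\emph{Upper boundary (main obstacle).} For $B-N\le i\le B$, the terms $S_{i+N+1}$ with $i+N+1>B$ vanish and arrivals are blocked at $B$. The recursion then becomes second order. Writing it downward from $B$ as $\xi S_{i-1}-(\xi+\mu)S_i+\mu(1-P_{\mathrm{on}})S_{i+1}=0$ (the jump terms land below $B-N$), the characteristic polynomial is
\begin{equation*}
\xi r^2-(\xi+\mu)r+\mu(1-P_{\mathrm{on}})=0,
\end{equation*}
whose roots are exactly $r_{1,2}$ in \eqref{eq:r1}--\eqref{eq:r2}. We impose the initial conditions at the top from the balance at $B$, namely $\mu S_B=\xi S_{B-1}$, i.e., $S_{B-1}/S_B=\mu/\xi$, with $\Psi_0=1$ and $\Psi_1=\mu/\xi$. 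This gives $S_{B-k}=S_B\Psi_k$ with $\Psi_k$ as stated.

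The scale $\Gamma_B$ is obtained by matching at the junction level $B-N-1$. The cut equation there equates $\xi S_{B-N-1}$, with $S_{B-N-1}=cz^{B-2N-2}$, to the drop from $B-N$ plus the sum of jumps from the boundary block. Solving gives \eqref{eq:GammaB}. The delicate point is that the boundary block is not exactly consistent with the geometric interior, which is presumably why the statement says ``approximated''. I would accept matching only through this single cut, and I expect checking the index bookkeeping for $z^{B-2N-1}$ versus $z^{B-N-1}$ to be the main effort.

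\emph{Normalization.} We sum the four blocks: the lower geometric sums give $\frac{N+1}{1-z}-\frac{z(1-z^{N+1})}{(1-z)^2}$ plus the $\frac{1-P_{\mathrm{on}}}{P_{\mathrm{on}}}$ correction; the interior geometric series gives $\frac{\xi}{\mu P_{\mathrm{on}}}\frac{1-z^{B-2N-1}}{1-z}$; and $\Gamma_B\sum_{k=0}^{N}\Psi_k$ is evaluated termwise as geometric series in $r_1,r_2$. Setting the total equal to $1$ yields \eqref{eq:S0:app}.
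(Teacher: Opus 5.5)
Your skeleton is the same as the paper's (Appendix~C), and most of it goes through:
\begin{itemize}
\item the dominant-root ansatz $S_i=\frac{\xi}{\mu P_{\mathrm{on}}}S_0z^{i-N-1}$, with the constant fixed at level $0$;
\item the partial geometric sums below $N$ and the $\frac{1-P_{\mathrm{on}}}{P_{\mathrm{on}}}$ correction at $N$;
\item the $r_{1,2}$ recursion at the top with $\Psi_0=1$, $\Psi_1=\mu/\xi$;
\item the same normalization sums.
\end{itemize}
Using cut equations instead of node balances is harmless for levels $0,\dots,N$, where both give the same $S_i$.

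The gap is the junction condition that fixes $\Gamma_B$. You assert that the cut between $B-N-1$ and $B-N$ yields \eqref{eq:GammaB}, but it does not. The glued profile (geometric interior plus $\Psi$-block) is not an exact solution and has only one free constant. So the cut and node-balance equations at the junction are no longer equivalent, and the one you impose determines $\Gamma_B$. Your cut, $\xi S_{B-N-1}=\mu(1-P_{\mathrm{on}})S_{B-N}+\mu P_{\mathrm{on}}\sum_{j=B-N}^{B}S_j$, gives
\[
\Gamma_B^{\mathrm{cut}}=\frac{\xi^2 z^{B-2N-2}}{\mu^2P_{\mathrm{on}}\left[(1-P_{\mathrm{on}})\Psi_N+P_{\mathrm{on}}\sum_{m=0}^{N}\Psi_m\right]}.
\]
This differs from \eqref{eq:GammaB}. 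Take $N=1$, $\xi=\mu=1$, $P_{\mathrm{on}}=1/2$ (any $B\ge 5$): then $z=\sqrt3-1$ and $\Psi_0=\Psi_1=1$. Your cut gives $\Gamma_B^{\mathrm{cut}}=\frac43z^{B-4}$, whereas \eqref{eq:GammaB} equals $z^{B-3}+z^{B-2}=(3-\sqrt3)z^{B-4}$. No index bookkeeping reconciles the two.

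The paper instead imposes the node balance at level $B-N-1$:
\[
-(\xi+\mu)S_{B-N-1}+\xi S_{B-N-2}+\mu(1-P_{\mathrm{on}})S_{B-N}+\mu P_{\mathrm{on}}S_B=0.
\]
Here only the jump out of level $B$ lands at $B-N-1$. That is why the denominator of \eqref{eq:GammaB} contains $(1-P_{\mathrm{on}})\Psi_N+P_{\mathrm{on}}$ rather than $\sum_m\Psi_m$. Then:
\begin{itemize}
\item substitute $S_{B-N-1}=\frac{\xi}{\mu P_{\mathrm{on}}}S_0z^{B-2N-2}$, $S_{B-N-2}=\frac{\xi}{\mu P_{\mathrm{on}}}S_0z^{B-2N-3}$ and $S_{B-m}=\Gamma_B\Psi_mS_0$;
\item multiply the polynomial equation for $z$ by $z^{B-2N-3}$, which gives $(\xi+\mu)z^{B-2N-2}-\xi z^{B-2N-3}=\mu(1-P_{\mathrm{on}})z^{B-2N-1}+\mu P_{\mathrm{on}}z^{B-N-1}$.
\end{itemize}
This produces exactly the exponents $B-2N-1$ and $B-N-1$ in \eqref{eq:GammaB}. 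With this junction condition in place of your cut, the rest of your argument goes through unchanged, including the normalization leading to \eqref{eq:S0:app}.
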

\begin{IEEEproof}
    Please see the Appendix \ref{eq:CTMC}.
\end{IEEEproof}
Therefore, the probability that the node has sufficient energy to transmit can be given by
\begin{equation}\label{eq:PE:def}
   \mathrm{P_E}\!
\approx\!
\left(\frac{\xi}{\mu P_{\mathrm{on}}}\frac{1-z^{B-2N-1}}{1-z}
+
\Gamma_B\sum_{m=0}^{N}\Psi_m\right)
S_0.
\end{equation}
\color{black}
When the energy buffer size $B$ goes to infinity, and the steady condition of the CTMC $z<1$ can be satisfied, the result can be simplified as the following corollary.

\begin{corollary}
With infinite buffer capacity, the steady-state distribution of the energy buffer state is 
\begin{equation}
   S_i\!=\!\begin{cases}
      \dfrac{\mathrm{P_{on}}(1-z^{i+1})}{1+N \mathrm{P_{on}}},~i \leq N-1,\\ \\

    \!\!\left(\dfrac{1\!-\!z^{N+1}}{1\!-\!z}\!+\!\dfrac{1-P_{\mathrm{on}}}{P_{\mathrm{on}}}\right)\!\dfrac{\mathrm{P_{on}}\xi(1\!-\!z)}{\mu(1+NP_{\mathrm{on}})},~i\!=\!N\\ \\
      
       \dfrac{\xi (1-z) }{\mu (1+N \mathrm{P_{on}})}z^{i-N-1},~i\geq N+1,
   \end{cases} 
\end{equation}
where $\mathrm{P_{on}}$ is given in \eqref{eq:Pon} and $z\in(0,1)$ is the root of the following equation
\begin{equation} \label{equ:root_z}
    \mu \mathrm{P_{on}}z^{N+2}+\mu (1-\mathrm{P_{on}})z^2-(\xi+\mu)z+\xi =0.
\end{equation}
\end{corollary}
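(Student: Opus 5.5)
The plan is to derive the result from the global balance and cut equations of the approximating one-dimensional CTMC (Fig.~\ref{fig:MarkovChain}) with $B=\infty$. We solve the upper part of the chain with a geometric ansatz and the lower part with flow-conservation cuts, then fix the constant by normalization, using the root equation \eqref{equ:root_z} to simplify the sums. Write $P=\mathrm{P_{on}}$ for brevity.

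\textbf{Step 1 (geometric tail).} For $i\ge N+1$, balance at state $i$ reads
\begin{equation*}
(\xi+\mu)S_i=\xi S_{i-1}+\mu(1-P)S_{i+1}+\mu P S_{i+N+1}.
\end{equation*}
For $i\ge N+2$, every neighbour involved lies in the region $i\ge N+1$. Substituting $S_i=c\,z^{i-N-1}$ and dividing by $c\,z^{i-N-2}$ gives $(\xi+\mu)z=\xi+\mu(1-P)z^2+\mu P z^{N+2}$, which is exactly \eqref{equ:root_z}. Positive recurrence of the infinite chain requires a root $z\in(0,1)$. This holds iff the drift condition $\xi<\mu(1+NP)$ is satisfied, since the polynomial equals $\xi>0$ at $z=0$ and has derivative $\mu(1+NP)-\xi$ at the root $z=1$. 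The state $i=N+1$, whose equation involves $S_N$, is not checked directly. It is implied by the remaining equations, because one global balance equation is redundant.

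\textbf{Step 2 (lower states via cuts).} For the cut between states $i$ and $i+1$ with $0\le i\le N-1$, upward flow is $\xi S_i$. Downward flow comes only from jumps $j\to j-N-1$ with $N+1\le j\le i+N+1$. Hence
\begin{equation*}
\xi S_i=\mu P c\sum_{k=0}^{i}z^k=\mu P c\,\frac{1-z^{i+1}}{1-z}.
\end{equation*}
Taking $i=0$ gives $c=\xi S_0/(\mu P)$, so $S_i=\frac{1-z^{i+1}}{1-z}S_0$. For the cut between $N$ and $N+1$, there is an extra downward flow $\mu(1-P)S_{N+1}$. This gives $\xi S_N=\mu P c\frac{1-z^{N+1}}{1-z}+\mu(1-P)c$, i.e.
\begin{equation*}
S_N=\Big(\frac{1-z^{N+1}}{1-z}+\frac{1-P}{P}\Big)S_0.
\end{equation*}
This matches the finite-$B$ lemma with the boundary block at $B$ pushed to infinity.

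\textbf{Step 3 (normalization; main obstacle).} Summing all pieces gives
\begin{equation*}
S_0^{-1}=\sum_{i=0}^{N}\frac{1-z^{i+1}}{1-z}+\frac{1-P}{P}+\frac{\xi}{\mu P(1-z)},
\end{equation*}
and this has to be collapsed to the stated closed forms. The key identity comes from rewriting \eqref{equ:root_z} as $\mu z[P(z^{N+1}-1)+(z-1)]=\xi(z-1)$. Dividing by $z-1$ yields $\mu z\big(1+P\sum_{k=0}^{N}z^k\big)=\xi$, which lets us eliminate $\sum_{k=0}^N z^k$ and hence $\sum_i(1-z^{i+1})$.

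I expect the algebra to reduce $S_0^{-1}$ to $(1+NP)/\big(P(1-z)\big)$, i.e. $S_0=P(1-z)/(1+NP)$. Substituting this back gives the three displayed cases: $S_i=P(1-z^{i+1})/(1+NP)$ for $i\le N-1$, the $S_N$ expression, and $S_i=\xi(1-z)z^{i-N-1}/\big(\mu(1+NP)\big)$ for $i\ge N+1$.

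This simplification is the delicate step. If the direct substitution gets messy, I would instead obtain the normalization from a flow identity: equate the mean energy arrival rate $\xi$ with the mean consumption rate $\mu(1+NP)\sum_{i\ge N+1}S_i$. This gives $\sum_{i\ge N+1}S_i=\xi/\big(\mu(1+NP)\big)$ directly. Then $c=\xi(1-z)/\big(\mu(1+NP)\big)$ follows with no polynomial manipulation, and $S_0=\mu P c/\xi$ recovers the remaining cases.
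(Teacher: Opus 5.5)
Your skeleton is the paper's Appendix C specialized to $B=\infty$: a geometric tail in the root $z$, then the lower states by telescoping, then $S_N$ with its extra $\mu(1-P)S_{N+1}$ term, then normalization. Your cut equations are exactly partial sums of the paper's per-state balance equations. Your expression for $S_0^{-1}$ is precisely the $B\to\infty$ limit of the bracket defining $S_0$ in the finite-$B$ lemma, since $z^{B-2N-1}\to 0$ and $\Gamma_B\to 0$.

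Working directly on the infinite chain buys you something the paper does not claim. Every cut $i\ge N+1$ involves only tail states, and under the ansatz it reduces identically to the root equation. Together with your cuts $0,\dots,N$, all cut equations hold, and hence all balance equations hold. So the geometric tail is exact here (the chain is skip-free upward), not merely a dominant-root approximation. Because the rates are bounded and the chain is irreducible, the normalized vector is the unique stationary law. This also replaces your remark that the equation at state $N+1$ is redundant, which on an infinite chain would need a summability argument. For the ``only if'' half of your root-existence claim, and for uniqueness of $z$ in $(0,1)$, add that the polynomial is convex on $[0,\infty)$.

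Two corrections are needed.

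First, the identity in Step 3 is wrong. Splitting $\mu=\mu P+\mu(1-P)$, the root equation reads $\mu z\,[P(z^{N+1}-1)+(1-P)(z-1)]=\xi(z-1)$, i.e.\ $\mu z\,(1-P+P\sum_{k=0}^{N}z^k)=\xi$. Your version would lead to $S_0^{-1}=(1+NP+Pz)/(P(1-z))$. With the correct identity, $\frac{\xi}{\mu P(1-z)}=\frac{(1-P)z}{P(1-z)}+\frac{z(1-z^{N+1})}{(1-z)^2}$. The last term cancels the $-\frac{z(1-z^{N+1})}{(1-z)^2}$ coming from $\sum_{i=0}^{N}\frac{1-z^{i+1}}{1-z}$, and $S_0^{-1}=\frac{N+1}{1-z}+\frac{1-P}{P(1-z)}=\frac{1+NP}{P(1-z)}$, as you expected. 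Your fallback sidesteps this and is rigorous: summing all cut equations gives $\xi\sum_i S_i=\mu(1+NP)\sum_{i\ge N+1}S_i$, so it is just normalization in another form.

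Second, do not claim to recover the displayed $i=N$ case. Your derivation gives $S_N=\bigl(\frac{1-z^{N+1}}{1-z}+\frac{1-P}{P}\bigr)\frac{P(1-z)}{1+NP}$, whereas the statement carries an extra factor $\xi/\mu$. Your other two cases match the statement, and the total mass is already one with your $S_N$. So that factor would break normalization unless $\xi=\mu$. It is a typo, since the finite-$B$ lemma itself has $S_N=(\cdots)S_0$, and you should flag it explicitly.
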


Notably, the necessary condition of finding a root of \eqref{equ:root_z} in $(0, 1)$ is $\mu(1+N \mathrm{P_{on}})>\xi$, i.e., when the energy arrival rate is lower than the energy consumption rate on average. In this case, the node operates in an energy-constrained regime. Otherwise, the source node always has sufficient energy to transmit.

Therefore, the probability that the node has sufficient energy to transmit can be given by
\begin{align}\label{eq:Pe}
\mathrm{P_E}= \min\left\{\frac{\xi}{\mu(1+N P_{\mathrm{on}})},1\right\}.
\end{align}

On the other hand, if the source node performs the status update without probing the coverage condition, the probability that the node has sufficient energy to transmit a data packet (denoted by $\mathrm{\bar{P}_E}$, which is the probability that the node has accumulated at least $N$ units of energy from the environment) can be computed by
    \begin{equation}\label{eq:Pe:nodetect}
     \mathrm{\bar{P}_E}=\min\left\{\frac{\xi}{N\mu},1\right\}.   
    \end{equation}

By comparing \eqref{eq:Pe} and \eqref{eq:Pe:nodetect}, we observe that when the satellite's on-off state is unknown to the node, a strategy involving coverage probing (which consumes one energy unit) before potential transmission enhances energy efficiency under the condition $\mathrm{P_{on}}\leq 1-\frac{1}{N}$. This condition implies scenarios where the energy consumption of the payload is relatively high compared to probing, and the channel availability (on state probability) is relatively low.

Leveraging the on-off process of the satellite link and the energy level results of the node from the Markov steady-state analysis, we can proceed to analyze the \gls{AoI} performance. 

The energy harvesting constraint can be captured by thinning the update attempt process with the energy-sufficient probability. Hence, the refined attempt rate can be regarded as $\mu_{\mathrm{eff}}=\mu P_E$, where $P_E$ is the energy sufficient probability. Using the AoI evaluation without energy constraint, i.e., \eqref{eq:AoI:EnergyAlways}, and effective service rate $\mu P_E$, we can obtain the follows.
\begin{theorem}\label{Aporox:AAoI}
The time-average AoI in probe-then-transmission scheme can be approximately given by
\begin{equation}
\begin{split}
  \bar{\Delta}&\approx\frac{1}{1+\lambda_{ \mathrm{os}}\mathbb{E}[T_{\mathrm{on}}]}\!\left(\frac{1}{\mu P_E}+\frac{1}{\lambda_{\mathrm{os}}(1-\mathcal{L}_{\mathrm{on}}(\mu P_E))}\right)\\&~~~~~+\frac{1}{\mu P_E}+3D,    
\end{split}
\end{equation}
where $\mathbb{E}[T_{\mathrm{on}}]=\int_{0}^{\frac{2\varphi_{\mathrm{e}}}{\omega}} t f_{T_{\mathrm{on}}}(t) dt$, $\varphi_{ \mathrm{e}}$, $\omega$, $\lambda_{\mathrm{os}}$, $f_{T_{\mathrm{on}}}(t)$ and $P_E$ are given by \eqref{eq:phie}, \eqref{eq:omega}, \eqref{eq:offperiod}, \eqref{eq:ONpdf} and \eqref{eq:PE:def}, respectively, and 
\begin{equation}
  \mathcal{L}_{\mathrm{on}}(\mu P_E)=  \int_{0}^{\frac{2\varphi_{ \mathrm{e} }}{\omega}} e^{-\mu P_E t}f_{T_{\mathrm{on}}}\,dt.
\end{equation}   
\end{theorem}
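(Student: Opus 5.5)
The approach is to treat Theorem~\ref{Aporox:AAoI} as a direct consequence of the energy-sufficient formula \eqref{eq:AoI:EnergyAlways}, combined with a thinning argument on the update-attempt process. The theorem is an approximation, so the aim is not an exact identity. Instead, I will identify the single modeling step that replaces the coupled energy--connectivity dynamics by an effective attempt rate, and then invoke the energy-sufficient corollary verbatim.

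\emph{Step 1: thinning.} Under the mean-field decoupling used to build the one-dimensional CTMC, the energy buffer evolves independently of the instantaneous on/off state, with stationary law given by the preceding lemma. Attempts are generated by a Poisson process of rate $\mu$ that is independent of everything else. By PASTA, an arbitrary attempt epoch sees the stationary energy distribution. It therefore finds $E\ge N+1$ with probability $\mathrm{P_E}$, as in \eqref{eq:PE:def}, or \eqref{eq:Pe} for infinite $B$.

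The key approximation is to treat successive energy-sufficient indicators as independent across attempts. This ignores the short-range correlation of the buffer. With that approximation, the attempts that actually probe form an independent thinning of a Poisson process, hence a Poisson process of rate $\mu_{\mathrm{eff}}=\mu\mathrm{P_E}$. This process is independent of the satellite on/off process, since the decoupling removed the dependence of $E$ on $S$.

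\emph{Step 2: reduction to the energy-unconstrained case.} After the thinning, the system is exactly the one covered by the energy-sufficient corollary: Poisson attempts of rate $\mu_{\mathrm{eff}}$ and an alternating renewal channel. The off periods are exponential with rate $\lambda_{\mathrm{os}}$ (Lemma~\ref{lma:OnOff_dist}) and the on periods have density $f_{T_\mathrm{on}}$. Each attempt succeeds if and only if the channel is on, after a fixed $3D$ delay.

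I will briefly re-derive that expression to confirm that only $\mu$ enters it, so the substitution $\mu\to\mu\mathrm{P_E}$ is legitimate. The derivation has three parts:
\begin{itemize}
\item From an on-state instant, the residual wait to the next delivery contributes the term $1/\mu$.
\item From an off instant, which occurs with probability $S^0=1/(1+\lambda_{\mathrm{os}}\mathbb{E}[T_\mathrm{on}])$, the expected wait is $u=1/\mu+1/\big(\lambda_{\mathrm{os}}(1-\mathcal{L}_{\mathrm{on}}(\mu))\big)$. This follows from a first-passage argument. Each on period is missed entirely with probability $\mathcal{L}_{\mathrm{on}}(\mu)$, giving a geometric number of extra off periods, each of mean length $1/\lambda_{\mathrm{os}}$.
\item Combining these through \eqref{eq:AoI:EHMCanalysis}, in its scalar form, and adding $3D$ yields \eqref{eq:AoI:EnergyAlways}.
\end{itemize}
Since $\mu$ enters only through the attempt process, replacing it by $\mu\mathrm{P_E}$ everywhere, including inside $\mathcal{L}_{\mathrm{on}}$, gives the claimed formula.

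\emph{Main obstacle.} The delicate point is Step 1. The energy-sufficient events at successive attempts are positively correlated, because the buffer is drained after a success. Moreover, in the exact model the buffer is correlated with $S$, since probing during off periods costs only one unit. The thinned process is therefore neither Poisson nor independent of the channel. I would not try to prove exactness here. Instead, I would state explicitly the two assumptions that make Step 2 rigorous: the mean-field replacement of $S$ by $\mathrm{P_{on}}$, and independent thinning. I would then argue heuristically that both become accurate when $\mathrm{P_E}\to1$ (weak energy constraint) or when on/off switching is fast relative to the energy dynamics (highly intermittent connectivity). These are exactly the regimes claimed in the abstract.
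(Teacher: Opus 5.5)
Your proposal is essentially the paper's argument: the paper also thins the attempt process to $\mu_{\mathrm{eff}}=\mu\mathrm{P_E}$ and substitutes this rate, including inside $\mathcal{L}_{\mathrm{on}}$, into the energy-sufficient formula \eqref{eq:AoI:EnergyAlways}, and your independent-thinning and mean-field assumptions are exactly what it leaves implicit. Two small refinements to your side remarks: the $1/\mu$ term is best read as the mean time from an arbitrary instant to the next attempt, so that $\bar{\Delta}-3D=1/\mu+S^0u$ (the mean wait from an on instant is $1/(\mu\mathrm{P_{on}})$, and only its $\mathrm{P_{on}}$-weighted contribution equals $1/\mu$); and the post-success drain does not create positive correlation but instead regularizes successful updates (closer to Erlang than exponential spacing), which is why the paper's Remark says pure thinning overestimates inter-update variability and subtracts $\delta=N/(2\xi)$.
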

\begin{remark}
Theorem 2 approximates the impact of energy harvesting by thinning the effective update rate. While this captures the rate loss caused by energy scarcity, pure thinning tends to overestimate the variability of the inter-update time, since the energy accumulation process partially regularizes update attempts. We use a subtractive correction term $\delta\!=\!N/2\xi$ to compensate for this overestimation, as validated in Fig.~\ref{fig:AoIvsEHRate_combined} and Fig.~\ref{fig:11}, which is the residual waiting time gap between the memoryless Bernoulli-thinning approximation and the Erlang energy-accumulation process\cite{JSAC2025}.
\end{remark}
Based on theorem \ref{Aporox:AAoI}, we further present some special cases:
\subsubsection{Infinite Energy Buffer $(B \rightarrow \infty)$} The time-average AoI can be further simplified by
\begin{equation}\label{eq:AoI:Probe:ECR}
\begin{split}
   \bar{\Delta}|_{B\to\infty}\!\approx\!\frac{1}{1\!+\!\lambda_{\mathrm{os}}\mathbb{E}[T_{\mathrm{on}}]}\!\left(\psi_{P}\!+\!\frac{1}{\lambda_{\mathrm{os}}(1\!-\!\mathcal{L}_{\mathrm{on}}\left(\psi_{P}\right))}\right)\!+\!\psi_{P}\!+\!3D,
\end{split}
\end{equation}
where $\psi_{P}=\max\{\frac{1+N P_{\mathrm{on}}}{\xi}, \frac{1}{\mu}\}$.

\subsubsection{Small Service Rate $(\mu \to 0)$} The system always operates in the energy-sufficient regime, $\psi_{P}=1/\mu$. By assuming that the propagation delay is negligible compared with an extremely large update interval, we obtain
\begin{equation}
  \bar{\Delta}|_{\mu\to 0}\approx\frac{1}{\mu \mathrm{P_{on}}},
\end{equation}
where $\mathrm{P_{on}}$ is given in \eqref{eq:Pon}. This implies that, in the case of relatively sparse updates, the time-average AoI is primarily determined by the update rate, on-state probability (coverage probability). 

\subsubsection{Large Satellite Density $(\lambda \to \infty)$} 
We have $\mathrm{P_{on}}\to 1$, we can obtain
\begin{equation}\label{eq:alwaysON}
\begin{split}
   \bar{\Delta}|_{\lambda \to \infty}\approx\max\left\{\frac{N+1}{\xi},~\frac{1}{\mu}\right\}+3D.
\end{split}
\end{equation}
This indicates that, when satellites are densely deployed and coverage is good, a behavior similar to terrestrial networks emerges: as the update rate increases, the AoI initially decreases, dominated by the update interval, and then levels off, becoming dominated by energy constraints\cite{JSAC2025}. Therefore, in this scenario, it is sufficient to ensure that the energy consumption rate exceeds the energy arrival rate, i.e., $\mu\geq \frac{\xi}{1+N}$.

Based on the analytical framework and $\mathrm{\bar{P}_E}$, we can also derive the time-average AoI under the direct transmission scheme without probing, which uses $N$ energy units for each data packet transmission, given by
\begin{equation}\label{eq:AoI:BlindTrans}
\begin{split}
   \bar{\Delta}_{D}\!=\!\frac{1}{1+\lambda_{ \mathrm{os}}\mathbb{E}[T_{\mathrm{on}}]}\!\left(\psi_{D}+\frac{1}{\lambda_{\mathrm{os}}(1-\mathcal{L}_{\mathrm{on}}\left(\psi_{D}\right))}\right)\!+\!\psi_{D}\!+\!D,
\end{split}
\end{equation}
where $\psi_{D}=\max\{\frac{N}{\xi}, \frac{1}{\mu}\}$.

Comparing \eqref{eq:AoI:Probe:ECR}, \eqref{eq:alwaysON}, and \eqref{eq:AoI:BlindTrans}, we can readily see that when the $P_{\mathrm{on}}$ is small and $\frac{N}{\xi}$ is large, the probe scheme is preferred.  More specifically, the probe mechanism can reduce the waiting time by avoiding ineffective energy consumption, but it introduces an additional delay cost of $2D$. Therefore, the probe scheme outperforms blind transmission only when the waiting-time reduction, jointly determined by the energy-saving gain and the on-period duration distribution, exceeds the additional $2D$.

\color{black}
\section{Numerical Results and Discussions}
In this section, we evaluate the time-average AoI across different network configurations using the analytical results. 
Unless otherwise specified, we use the following parameters: $B=3N+1$, $R_E=6371~\mathrm{km}$, $P_\mathrm{tx}=30~\mathrm{dBm}$, $\sigma^2=-105~\mathrm{dBm}$, $\alpha=2$, $i_o=53^{\circ}$, $h=800~\mathrm{km}$, $GM_E=3.986\times10^{14}~\mathrm{m^3/s^2}$, and $T_E=86400~\mathrm{s}$.

Fig.~\ref{fig:AoIvsheight} investigates the impact of satellite altitude on the time-average AoI, considering different total numbers of satellites. A clear trend is that the time-average AoI deteriorates as the satellite orbital altitude increases. This finding underscores the inherent advantage of the LEO satellite for applications demanding high information freshness, suggesting that minimizing orbital altitude should be prioritized where feasible. The reason is that, under the same transmit power, the maximum propagation range of a ground node is fixed; as the orbital altitude increases, the satellite’s elevation angle within the node’s visibility region decreases, thereby reducing the probability that the link remains in the on state. In addition, higher orbital altitudes inherently lead to larger propagation delays, which further degrade AoI. On the other hand, increasing satellite density can improve timeliness and mitigate the severe degradation caused by orbital altitude differences.
\begin{figure}[t!] 
  \centering{}
    {\includegraphics[width=0.9\figwidth]{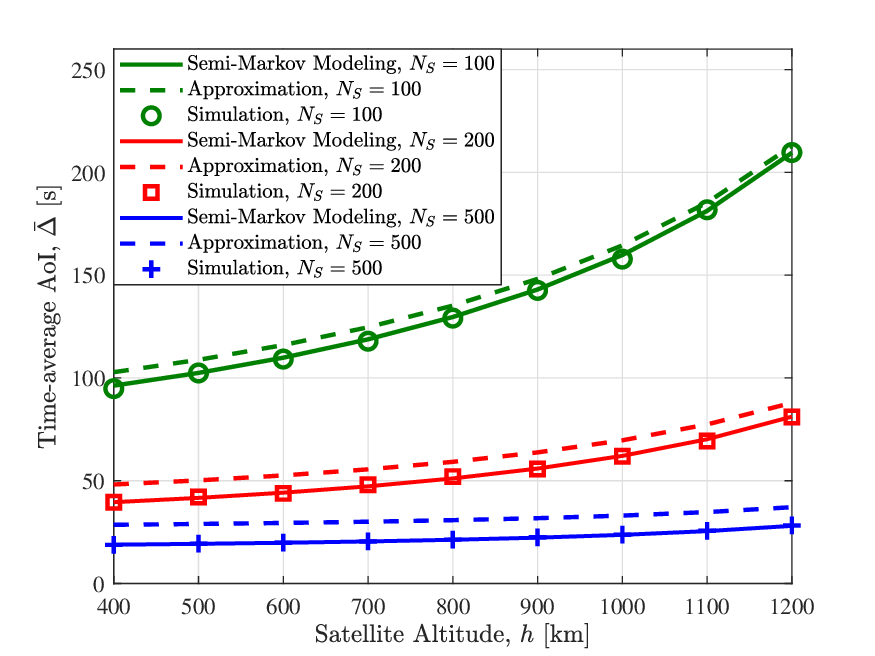}}
  \caption{Time-average AoI versus the satellite altitude. $\mu=0.2~ \mathrm{packet/s}$, $\xi=0.5~\mathrm{energy~units /s} $, $N=10~\mathrm{energy~units/packet}$, $\theta=10\mathrm{dB}$.}
  \label{fig:AoIvsheight}
   \vspace{-0.5cm}
\end{figure}

Fig.~\ref{fig:6} shows the time-average AoI as a function of the number of satellites,  for different values of the SNR decoding thresholds and update rates. As the number of satellites increases, the time-average AoI decreases and gradually saturates. This is because denser satellite deployment shortens the mean off-service period and increases the fraction of time the source is in the on-service state. Consequently, update attempts are more likely to occur during satellite-available periods. However, the marginal AoI reduction becomes smaller as the satellite density further increases, since the AoI eventually becomes limited by the update rate and the energy-availability constraint rather than by satellite intermittency alone, and is close to the value of $1/(\mu \mathrm{P_E})$.

\begin{figure}[t] 
  \centering{}
   {\includegraphics[width=0.9\figwidth]{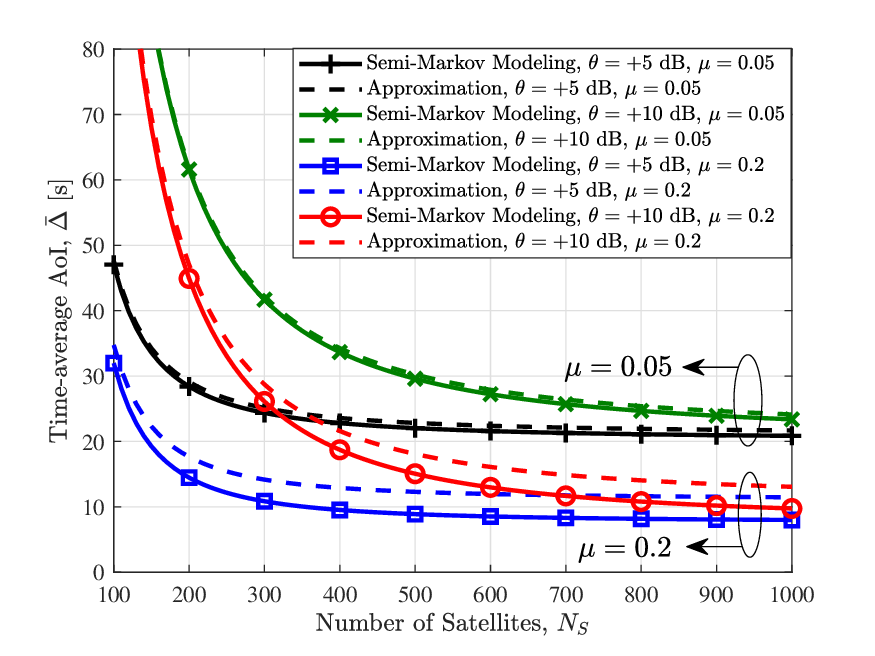}}
  \caption{Time-average AoI versus number of satellites. $\xi=1~\mathrm{energy~units/s}$, $N=10~\mathrm{energy~units/packet}$.}
  \label{fig:6}
  \vspace{-0.5cm}
\end{figure}

\begin{figure}[t] 
  \centering{}
    {\includegraphics[width=0.9\figwidth]{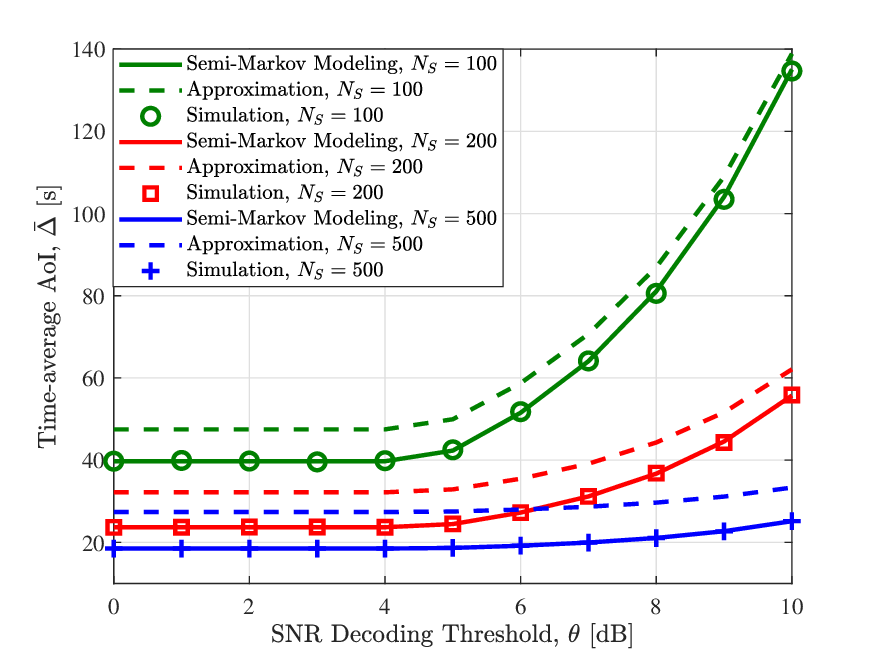}}
  \caption{Time-average AoI versus SNR decoding threshold.  $\mu=0.1~\mathrm{packet/s}$, $\xi=0.5~\mathrm{energy~units /s}$, $N=10~\mathrm{energy~units/packet}$.}
  \label{fig:7}
  \vspace{-0.5cm}
\end{figure}

To closely examine the effect of the decoding threshold, Fig.~\ref{fig:7} plots the time-average AoI as a function of the minimum required SNR for successful decoding. The performance curve can be divided into two distinct regions: 

$\bullet$ $\theta\in[0\mathrm{dB},5\mathrm{dB}]$: In this range, the decoding threshold is already sufficiently low. Further decreasing the threshold (which in turn increases the maximum transmission distance) yields no additional performance gain. Therefore, in this regime, the AoI performance is insensitive to the decoding threshold and is instead primarily dominated by the update frequency and satellite density. 

$\bullet$ $\theta \!\in\![5\mathrm{dB},10\mathrm{dB}]$: Here, as the SNR decoding threshold increases, the maximum transmission distance effectively decreases. This shrinks the visible range of the node (i.e., the spherical cap), leading to more frequent and severe intermittent link outages. This limitation becomes the bottleneck, directly causing the AoI to degrade (increase) as the SNR decoding threshold rises. Furthermore, increasing the number of satellites can alleviate this problem. As shown in the figure, for the curve with a larger number of satellites ($N_S = 500$), the rate of AoI degradation is noticeably lower, as the higher satellite density partially compensates for the reduced visibility.


Next, we examine the impact of the energy arrival rate, the energy consumption per payload transmission, and the update rate on the time-average AoI, as these parameters directly govern whether the system operates in an energy-sufficient or an energy-constrained regime.
\par Fig.~\ref{fig:AoIvsEHRate_combined} depicts the time-average AoI as a function of the energy arrival rate at the ground node, illustrating how the availability of energy for transmission impacts AoI. For small update rates in Fig.~\ref{fig:AoIvsEHRate_combined}, i.e., $\mu=0.05$, the AoI initially decreases when $\xi\leq 0.5$, and then gradually levels off as the energy arrival rate increases. This indicates that the impact of the energy arrival rate on the AoI is regime-dependent: it is significant in the energy-constrained regime but becomes marginal once the system enters the energy-sufficient regime.
\begin{figure}[t] 
  \centering{}
    {\includegraphics[width=0.9\figwidth]{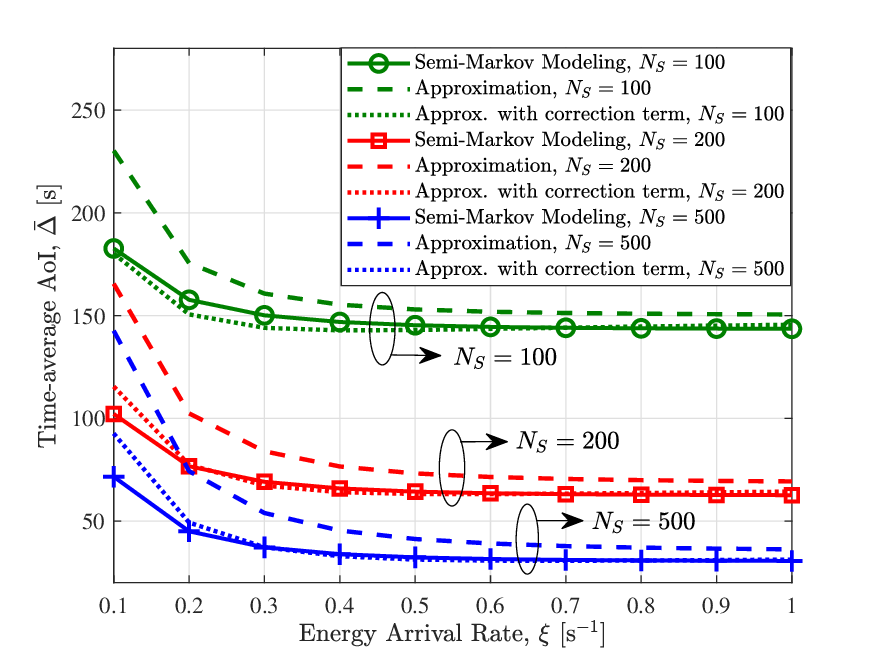}}
  \caption{Time-average AoI versus energy arrival rate. $\mu=0.05~\mathrm{packet/s}$, $N=10~\mathrm{energy~units/packet}$, $\theta=10\mathrm{dB}$.}
  \label{fig:AoIvsEHRate_combined}
    \vspace{-0.5cm}
\end{figure}
\begin{figure}[t] 
  \centering{}
    {\includegraphics[width=0.9\figwidth]{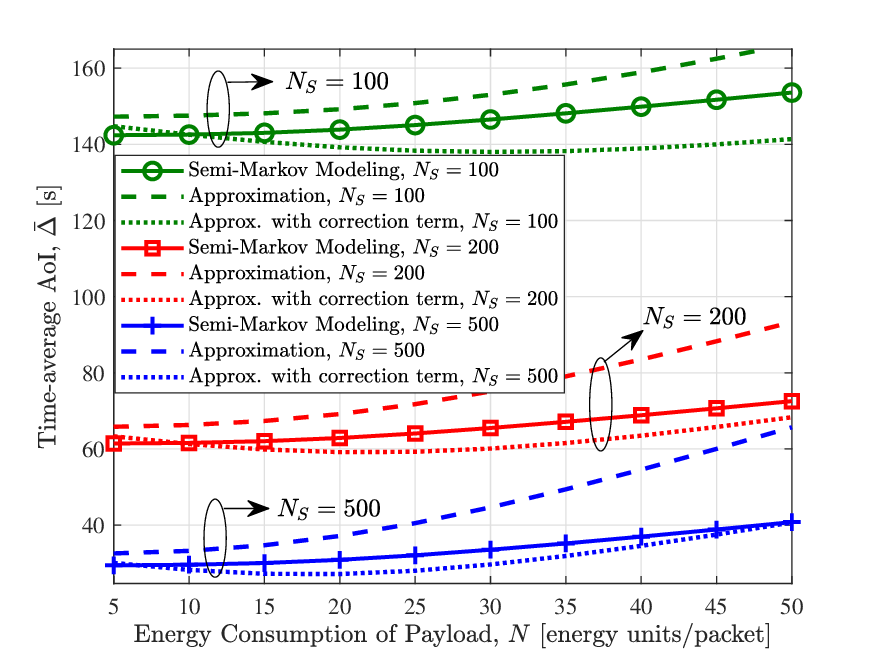}}
  \caption{Time-average AoI versus the energy consumption of payload.  $\xi=1~\mathrm{energy~units /s}$, $\mu=0.05~\mathrm{packet/s}$, $\theta=10\mathrm{dB}$.}
  \label{fig:11}
 \vspace{-0.5cm}
\end{figure}
~Fig.~\ref{fig:11} presents the time-average AoI as a function of the energy consumption for payload transmission. When $N$ is relatively small, the system can be regarded as operating in an energy-sufficient regime, where the time-average AoI is insensitive to the payload energy consumption. However, as payload energy consumption increases, the system enters the energy-constrained regime, leading to rapid degradation in the time-average AoI. Furthermore, a smaller total number of satellites results in a faster degradation of the time-average AoI, due to highly intermittent NTN conditions.

Fig.~\ref{fig:12} shows the time-average AoI as a function of the update rate of the source node. The result shows that the impact of $N_S$ depends on the energy regime. When NTNs operate in energy-sufficient regimes, the AoI performance shows little difference across satellite networks of varying densities. However, in energy-constrained scenarios, high-density satellite networks achieve significantly better AoI performance than low-density networks.

\begin{figure}[t] 
  \centering{}
    {\includegraphics[width=0.9\figwidth]{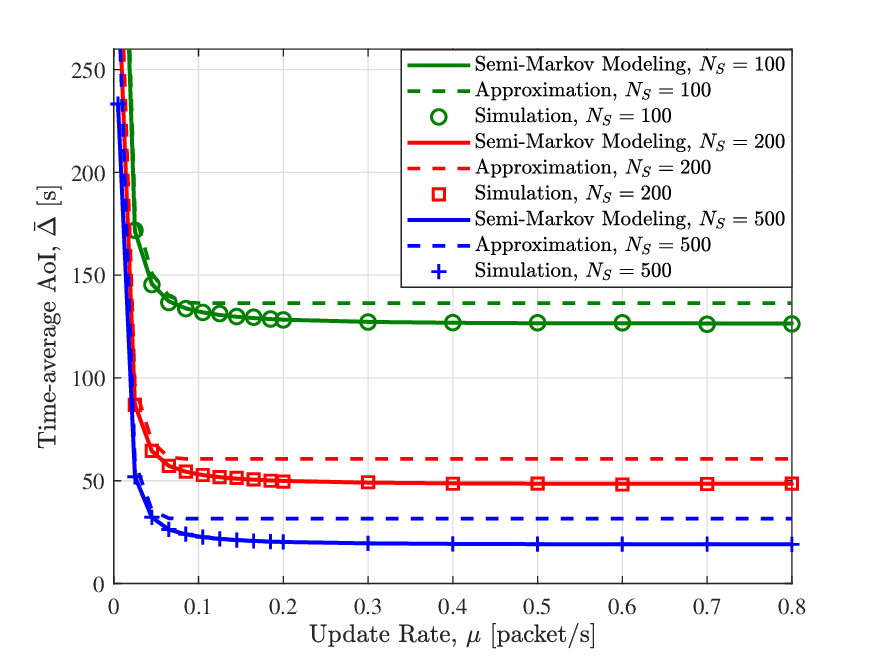}}
  \caption{Time-average AoI versus the update rate.  $\xi=0.5~\mathrm{energy~units /s}$, $N=10~\mathrm{energy~units/packet}$, $\theta=10\mathrm{dB}$.}
  \label{fig:12}
    \vspace{-0.5cm}
\end{figure}


Consider Fig.~\ref{fig:AoIvsheight}-Fig.~\ref{fig:12} together, replacing the detailed on-off transition dynamics with the average steady-state on probability provides a conservative upper bound for the accurate time-average AoI. Moreover, the gap between the two results becomes smaller when $P_\mathrm{on}$ is relatively small, such as when $N_s$ is small, or the SNR decoding threshold $\theta$ is high, and when the system has relatively sufficient energy supply, such as when $\mu$ and $N$ are small or $\xi$ is large. These observations indicate that the approximation is not merely a numerical simplification. Rather, it reveals that the detailed energy-channel coupling is essential. The approximation becomes accurate when either the channel availability is the dominant bottleneck or the energy constraint is inactive. Conversely, a large gap indicates a regime where the temporal synchronization between energy replenishment and on periods plays a significant role, and the semi-Markov model is necessary.

Fig.~\ref{fig:13} displays the time-average AoI comparison between the proposed probe strategy and a baseline approach that transmits directly without an initial probe, quantifying the energy-saving benefits of the probing mechanism. It is observed that adopting a probing strategy using a single energy unit to sense the channel is particularly advantageous when the satellite connectivity is highly intermittent, such as in cases of low satellite density or high SNR decoding thresholds. This method not only reduces power consumption but also yields significant improvements in information timeliness, as the harvested energy must accumulate over time before transmission, making the probing-based approach more efficient than direct transmission when the knowledge of the on-off connectivity status of the satellite is unknown.

\begin{figure}[t] 
  \centering{}
    {\includegraphics[width=0.9\figwidth]{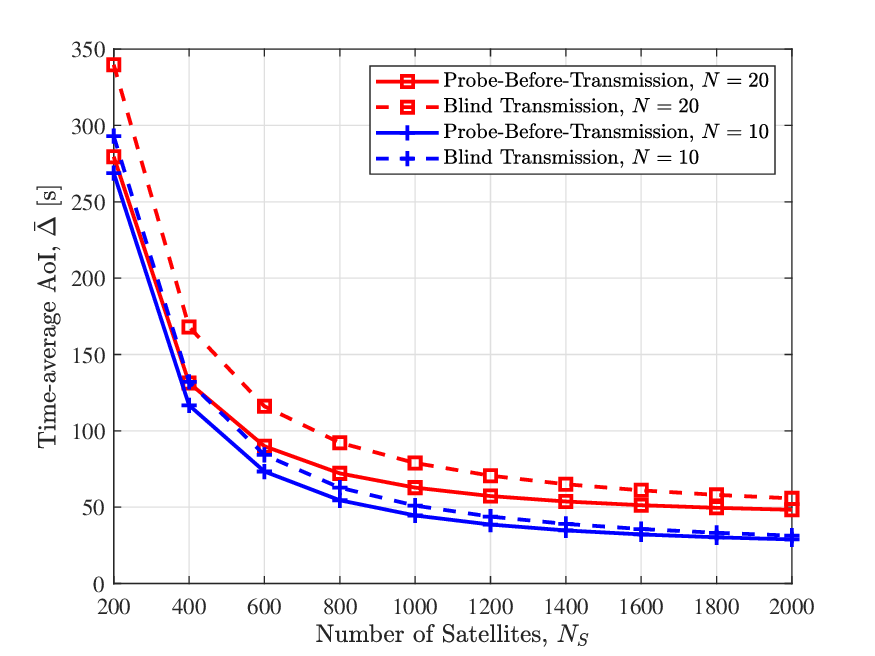}}
  \caption{Scheme comparison. $\mu=0.2~\mathrm{packet/s}$, $\xi=0.5~\mathrm{energy~units /s}$, $P_\mathrm{tx}=25~\mathrm{dBm}$, $\theta=10\mathrm{dB}$.}
  \label{fig:13}
    \vspace{-0.5cm}
\end{figure}

\color{black}

\section{Conclusion}

In this paper, we investigated the AoI in LEO satellite-assisted EH-IoT networks, where status updating is jointly constrained by intermittent satellite visibility and random energy availability. By modeling the LEO satellite deployment through a Poisson point process, we characterized the geometry-induced on/off contact process of the LEO satellite network. Building on this contact model, we developed a semi-Markov analytical framework that captures the coupling between the satellite channel state and the dynamic energy state under the probing scheme. This framework enables the characterization of the stationary energy-channel distribution and the derivation of the time-average AoI. Our results show that satellite intermittency and energy state jointly determine the achievable AoI. In particular, when satellite contacts are highly intermittent or the harvested energy is limited, a probing-based update mechanism can better exploit available satellite contact opportunities by avoiding unnecessary transmissions during unavailable periods, thereby improving AoI under such operating regimes. 

Future work may extend the proposed framework from AoI analysis to optimal status update control under the joint evolution of AoI, channel state, and energy state. It would also be valuable to incorporate more practical NTN-IoT factors, such as fading, interference, and random access contention, for performance analysis in large-scale satellite IoT networks.

\appendices
\section{}\label{proof:ONOFF} 

The off-service period $T_\mathrm{{off}}$ is the time elapsed until the next satellite enters the coverage dome of dome. We derive its distribution by mapping the spatial PPP to a temporal arrival process. This mapping is enabled by the relative motion (angular velocity $\omega$) between the user's dome and the static satellite distribution. Consider a small time interval $\epsilon$. Due to this motion, the dome scans a new region of area $A_\epsilon$, which is given by: $A_\epsilon = 2\left(R_E+h\right)^2\,\omega\,\epsilon\,\sin (\varphi_{ \mathrm{e}})$. By the fundamental properties of a homogeneous PPP:

$\bullet$ Independent Increments: The number of satellites in any disjoint area (e.g., areas scanned in disjoint time intervals) are independent random variables.

$\bullet$ Stationary Increments: The number of satellites $N(A_\epsilon)$ in the scanned area $A_\epsilon$ follows a Poisson distribution with mean $\mathbb{E}[N(A_\epsilon)] = \lambda A_\epsilon$. Since $A_\epsilon$ is directly proportional to $\epsilon$ (i.e., $A_\epsilon \propto \epsilon$), the resulting temporal satellite arrival process has stationary increments.

Since the satellite arrival is a counting process with stationary and independent increments, it constitutes a homogeneous Poisson process. The inter-arrival time of this process, $T_\mathrm{{off}}$, is therefore exponentially distributed. The rate of this Poisson process, $\lambda_{\mathrm{os}}$, is the expected number of arrivals per unit time:
\begin{equation}
\lambda_{\mathrm{os}} = \lim_{\epsilon \to 0} \frac{\mathbb{E}[N(A_\epsilon)]}{\epsilon} = \lim_{\epsilon \to 0} \frac{\lambda A_\epsilon}{\epsilon} = 2 \omega\lambda\sin (\varphi_{ \mathrm{e} }) \left(R_E+h\right)^2.
\end{equation}
Therefore, $T_{\mathrm{off}} \sim \exp(\lambda_{\mathrm{os}})$. As such, we can conclude that the off-service periods are exponentially distributed with the rate $2 \omega\lambda\sin (\varphi_{ \mathrm{e} }) \left(R_E+h\right)^2$.

On the other hand, the trajectory of each satellite in the dome section is characterized by the latitude difference between the source node and the point of the satellite projected on the ground, where these together determine the total service time of each satellite when establishing a connection with the source node. 
Let $\Theta$ denote the angle difference that represents the possible entry location of satellites, following a uniform distribution on [$- \varphi_{ \mathrm{e} }$, $ \varphi_{ \mathrm{e} }$]. 
The service time $T_\mathrm{{on}}$ is thus a random variable given by
\begin{align} 
    T_\mathrm{{on}} = \frac{2}{\omega}\arcsin\left( \tfrac{\sqrt{\sin^2 (\varphi_{\mathrm{e}})-\sin^2(\Theta)}}{\cos(\Theta)} \right).
\end{align}
Then, the CDF of $T_\mathrm{{on}}$ can be expressed as
\begin{equation} 
\begin{aligned}
    &F_{T_\mathrm{on}}(t)  = \mathbb{P}(T_\mathrm{on} < t)= \mathbb{P}\left(|\Theta| > \arcsin\sqrt{\tfrac{\sin^2\varphi_\mathrm{e} - \sin^2(\frac{\omega t}{2})}{1 - \sin^2(\frac{\omega t}{2})}}\right)\\
    & = 
    \begin{cases}
    1 - \frac{1}{\varphi_\mathrm{e}} \arcsin\sqrt{\frac{\sin^2\varphi_\mathrm{e} - \sin^2(\frac{\omega t}{2})}{1 - \sin^2(\frac{\omega t}{2})}}, & \text{if } t \in \left[0,\frac{2 \varphi_\mathrm{e}}{\omega}\right],  \\
    1, & \text{otherwise}.
    \end{cases}
\end{aligned}
\end{equation} 
The PDF of $T_\mathrm{{on}}$ can be derived using the distribution of $\Theta$.

\section{}\label{eq:AoI:matrix}
We derive the waiting-time equations for
$u_e$ and $v_e(a)$ by applying first-passage-time method. For $u_e$, we have
\begin{equation}\label{eq:FirstStep}
\begin{split}
     u_e=&dt+(\xi dt) u_{e+1}+ (\mu dt) u_{e-1}+(\lambda_\mathrm{os}dt) v_e(0)
     \\&~~~+(1-(\xi+\mu+\lambda_\mathrm{os})dt)u_e+o(dt),   
\end{split}
\end{equation}
where  $dt$ is an infinitesimal interval. The meaning of \eqref{eq:FirstStep} is that the current expected waiting time equals the first-step time plus the weighted average of the remaining expected waiting times after transitioning to the next state. Dividing both sides by  $dt$ and letting $dt\to 0$, \eqref{eq:FirstStep} can be transformed as
\begin{equation}
\xi\left(u_{e+1}-u_e\right)\!+\!\mu\left(u_{e-1}-u_e\right)\!+\!\lambda_\mathrm{os}\left(v_e(0)-u_e\right)=-1.
\end{equation}
In vector form, the off-state equation is
\begin{equation}
\mathbf{Q}_0\mathbf{u}+\lambda_\mathrm{os}(\mathbf v(0)-\mathbf u)=-\mathbf 1,
\end{equation}
where the off period waiting time vector $\mathbf{u}=[u_0,u_1,...,u_B]^{T}$, and the vector $\mathbf{v}(0)=[v_0(0),v_1(0),...,v_B(0)]^{T}$ represents the waiting time at the beginning of on period.


For the on period state, by treating a successful update as an absorbing event,
we use the killed generator $\mathbf{C}$ defined in \eqref{eq:C:define}. Using the scaled function
$\mathbf w_1(a)=\bar F(a)\mathbf v(a)$, we obtain
\begin{equation}\label{eq:w1}
\mathbf w_1(a)=
\int_a^{T_{\max}}e^{\mathbf{C}(t-a)}\left(\bar F(t)\mathbf 1+f(t)\mathbf u\right)dt.
\end{equation}

At the beginning of an on period, i.e., when $a=0$, the scaled factor satisfies $\bar F(a)=1$. Therefore, $\mathbf v(0)=\mathbf w_1(0)$. We define
$$
\mathbf{A}=\int_0^{T_{\max}}e^{\mathbf{C}t}f(t)\,dt,
\qquad
\mathbf{b}=\int_0^{T_{\max}}e^{\mathbf{C}t}\bar F(t)\mathbf 1\,dt.
$$
Substituting $\mathbf v(0)=\mathbf{b}+\mathbf{A}\mathbf{u}$ into the off-state
equation gives
\begin{equation}\label{eq:u}
\mathbf{u}=-(\mathbf{Q}_0-\lambda_{\mathrm{os}} \mathbf{I}+\lambda_{ \mathrm{os}} \mathbf{A})^{-1}
(\mathbf 1+\lambda_{\mathrm{os}}\mathbf b).
\end{equation}

Therefore, using
$\mathbf S^1(a)=\lambda_{ \mathrm{os}}\mathbf S^0e^{\mathbf{Q}_1a}\bar F(a)$, the time-average AoI expression can be written as Theorem \ref{Th:AoI:semi}. 

\section{}\label{eq:CTMC}
For ease of exposition, we denote by $\zeta_o=\mu P_{\mathrm{on}}$, and $\zeta_f=\mu (1-P_{\mathrm{on}})$. Then, the local balanced equations of the steady state of the CTMC can be derived by
\begin{subequations}\label{eq:finite_buffer_balance}
\begin{align}
&-\xi S_0
+\zeta_o S_{N+1}
=0,
\label{eq:infinite_buffer_balance_0}
\\
&-\xi S_i
+\xi S_{i-1}
+\zeta_o S_{i+N+1}
=0,~~1\le i\le N-1,
\label{eq:infinite_buffer_balance_low}
\\
&-\xi S_N
+\xi S_{N-1}
+\zeta_f S_{N+1}
+\zeta_o S_{2N+1}
=0,
\label{eq:infinite_buffer_balance_N}
\\
&-(\xi+\mu)S_i
+\xi S_{i-1}
+\zeta_f S_{i+1}+\zeta_o S_{i+N+1}
=0,
\nonumber\\
&\hspace{3.8cm}~N+1\leq i \leq B-N-1,
\label{eq:infinite_buffer_balance_mid}
\\
&-(\xi+\mu)S_i+\xi S_{i-1}+\zeta_fS_{i+1}=0,  \label{eq:infinite_buffer_balance_B1} \nonumber\\
&\hspace{3.8cm}~B-N-2\leq i \leq B-1,\\
&-\mu S_B+\xi S_{B-1}=0  \label{eq:BufferBound}.
\end{align}
\end{subequations}
\color{black}
Using the equation \eqref{eq:infinite_buffer_balance_mid} of $S_i$ when $i\geq N+1$, we approximate 
\begin{equation}\label{eq:SiN+1}
    S_i=\sum_{l=0}^{N+2} \bar{C}_{l}z_{l}^i = \bar{C}_d z_d^{i}+\sum_{l\neq d} \bar{C}_{l}z_{l}^i \approx C_d z_d^{i},
\end{equation}
where $z_l$ denotes all the roots of the following equation
\begin{equation}\label{eq:z:root:proof}
    \zeta_o z^{N+2}+\zeta_f z^2-(\xi+\mu)z+\xi =0,
\end{equation}
in the complex domain, $\bar{C}_i$ is the coefficient corresponding to the root. The approximation sign indicates that, to obtain a closed-form expression, we retain the dominant root $z_d$ as an approximation, where $C_d$ is the corresponding effective dominant-mode coefficient.

It can be proved that \eqref{eq:z:root:proof} only has one non-trivial positive real root ($z=1$ is always a trivial root). This root determines whether the energy distribution decays toward lower energy states or grows toward the high-energy boundary, and it is the dominant root\footnote{From a physical perspective, it is also reasonable to retain only this root. Negative real roots lead to sign-alternating terms, while complex roots introduce oscillatory components, neither of which is suitable for representing the dominant smooth trend of a steady-state probability distribution.}. 
Then, by using equation \eqref{eq:infinite_buffer_balance_0}, we have
\begin{equation}\label{eq:SiN+1:C}
    C_d=\frac{\xi S_0}{\zeta_o z^{N+1}}.
\end{equation}
Using the equation of $S_i$ when $1\leq i\leq N-1$, 
\begin{equation}
    S_i-S_{i-1}= S_0 z^i.
\end{equation}
Therefore, for $0\leq i\leq N-1$, we have
\begin{equation}\label{eq:Si0N-1}
    S_i=S_0\sum_{j=0}^{i}z^j \overset{(a)}=\frac{S_0(1-z^{i+1})}{1-z}. 
\end{equation}
Step ($a$) is derived by applying applying the geometric series summation when $z\neq 1$. When the dominant root satisfies $z=1$, the limiting value $S_i=iS_0$ should be taken.

Then, using \eqref{eq:infinite_buffer_balance_N}, the steady state probability $S_{N}$ can be derived as
\begin{equation}
    S_N=\left(\frac{1-z^{N+1}}{1-z}+\frac{\zeta_f}{\zeta_o}\right)S_0.
\end{equation}
For $B-N \leq i\leq B-1$, since from boundary condition \eqref{eq:BufferBound}, we then prove the steady state probability in this regime can be expressed as
\begin{equation}\label{eq:SBminM}
 S_{B-m}=S_B\Psi_m,~~~~m=0,1,...,N.
\end{equation}
Define the backward-indexed sequence $T_m=S_{B-m},~m=0,1,...,N.$
Then, the \eqref{eq:infinite_buffer_balance_B1} can be given by
\begin{equation}
   \xi T_{m+1}-(\xi+\mu)T_{m}+\zeta_f T_{m-1}=0.
\end{equation}
This is a second-order homogeneous linear difference equation with constant coefficients. To solve it, assume a solution of the form $T_m=r_m$. Then, we have the characteristic equation 
\begin{equation}
    \xi r^2- (\xi+\mu)r+\zeta_f=0. 
\end{equation}
Thus the two characteristic roots are \eqref{eq:r1} and  \eqref{eq:r2}, when $r_1\neq r_2$, the general solution is given by $T_m =Ar_1^m+Br_2^m$.
From $T_0=S_B$, and $T_1=S_{B-1}=\frac{\xi}{\mu}S_B$, we have 
\begin{equation}
\begin{cases}
     A+B=S_B \\
     Ar_1+Br_2=\frac{\mu}{\xi}S_B.
\end{cases}
\end{equation}
Therefore
\begin{equation}
    A=S_B\tfrac{\frac{\mu}{\xi}-r_2}{r_1-r_2},~~B=S_B\tfrac{r_1-\frac{\mu}{\xi}}{r_1-r_2}.
\end{equation}
Therefore, \eqref{eq:SBminM} can be proved, and 
\begin{equation}
    \Psi_m=\tfrac{
\left(\frac{\mu}{\xi}-r_2\right)r_1^m+\left(r_1-\frac{\mu}{\xi}\right)r_2^m}{r_1-r_2}.
\end{equation}
Equivalently, $S_i=S_B \Psi_{B-i},~B-N\leq i\leq B$. We assume $S_B=\Gamma_B S_0$. For state $i=B-N-1$, the equation \eqref{eq:infinite_buffer_balance_mid} can be given by
\begin{equation}
    -(\xi+\mu)S_{B-N-1}+\xi S_{B-N-2}+\zeta_f S_{B-N}+\zeta_o S_{B}=0.
\end{equation}
We can solve  $\Gamma_B$ as \eqref{eq:GammaB}.
Then, the steady state $S_0$ can be solved by using normalizing condition $\sum S_i=1$.

\bibliographystyle{IEEEtran}
\bibliography{abrv, ref}

\end{document}